\newtheorem{prop}{Proposition}[section]
\newtheorem{lem}{Lemma}[section]
\renewcommand{\P}{\mathbb{P}}
\newcommand{\R}{\mathbb{R}}
\newcommand{\E}{\mathbb{E}}
\newcommand{\F}{\mathcal{F}}
\newcommand{\wW}{\widetilde{W}}
\newcommand{\M}{\mathcal{M}}
\newcommand{\eps}{\varepsilon}
\renewcommand{\tilde}{\widetilde}
\begin{document}
\begin{frontmatter}

\title{Outperforming the market portfolio with a~given~probability}
\runtitle{\hspace*{-18pt}Outperforming the market portfolio with a given probability}

\begin{aug}
\author[A]{\fnms{Erhan} \snm{Bayraktar}\corref{}\thanksref{t1}\ead[label=e1]{erhan@umich.edu}},
\author[A]{\fnms{Yu-Jui} \snm{Huang}\ead[label=e2]{jayhuang@umich.edu}}
\and
\author[B]{\fnms{Qingshuo} \snm{Song}\thanksref{t2}\ead[label=e3]{song.qingshuo@cityu.edu.hk}}
\runauthor{E. Bayraktar, Y.-J. Huang and Q. Song}
\affiliation{University of Michigan, University of Michigan and
City~University~of~Hong~Kong}
\address[A]{E. Bayraktar\\
Y.-J. Huang\\
Department of Mathematics\\
University of Michigan\\
Ann Arbor, Michigan 48109\\
USA\\
\printead{e1}\\
\hphantom{E-mail: }\printead*{e2}} 
\address[B]{Q. Song\\
Department of Mathematics\\
City University of Hong Kong\\
China\\
\printead{e3}}
\end{aug}

\thankstext{t1}{Supported in part by the National Science Foundation under
an applied mathematics research Grant DMS-09-06257, a Career Grant
DMS-09-55463 and Susan M. Smith Professorship.}

\thankstext{t2}{Supported in part by City University of Hong Kong
(Project No. 7002677)
and the Research Grants Council of Hong Kong (Project No. CityU 104007).}

\received{\smonth{2} \syear{2011}}
\revised{\smonth{6} \syear{2011}}

%
\begin{abstract}
Our goal is to resolve a problem proposed by Fernholz and Karat\-zas [On
optimal arbitrage (2008) Columbia Univ.]: to characterize the minimum
amount of initial capital with which an investor can beat the market
portfolio with a certain probability, as a function of the market
configuration and time to maturity. We show that this value function is
the smallest nonnegative viscosity supersolution of a nonlinear PDE. As
in Fernholz and Karatzas [On optimal arbitrage (2008) Columbia Univ.],
we do not assume the existence of an equivalent local martingale
measure, but merely the existence of a local martingale deflator.
\end{abstract}

%
\begin{keyword}[class=AMS]
\kwd[Primary ]{60H30}
\kwd{60H10}
\kwd{91G99}
\kwd[; secondary ]{60G44}
\kwd{35A02}
\kwd{60J70}.
\end{keyword}
\begin{keyword}
\kwd{Strict local martingale deflators}
\kwd{optimal arbitrage}
\kwd{quantile hedging}
\kwd{viscosity solutions}
\kwd{nonuniqueness of solutions of nonlinear PDEs}.
\end{keyword}

\end{frontmatter}

\section{\texorpdfstring{Introduction.}{Introduction}}

In this paper we consider the quantile hedging problem when the
underlying market does not have an equivalent martingale measure.
Instead, we assume that there exists a~\textit{local martingale deflator}
(a~strict local martingale which, when multiplied by the asset prices,
yields a~positive local martingale). We characterize the value function
as the smallest nonnegative viscosity supersolution of a fully
nonlinear partial differential equation. This resolves the open problem
proposed in the final section of~\cite{FK08}; also see pages 61 and 62
of~\cite{Ruf-thesis}.

Our framework falls under the umbrella of the stochastic portfolio
theory of Fernholz and Karatzas (see, e.g.,
\cite{fernholzbook,FKK05,FK08a}) and the benchmark approach of
Platen~\cite{platenbook}. In this framework, the linear partial
differential equation that the superhedging price satisfies does not
have a unique solution; see, for example,~\cite{FK10OA,FK08a,DFKOct09}
and~\cite{Ruf2010}. Similar phenomena occur when the asset prices have
\textit{bubbles}: an equivalent local martingale measure exists, but
the asset prices under this measure are strict local martingales; see,
for example, \cite
{Cox-Hobson,Heston-Loewenstein-Willard,JPS,Ekstrom-Tysk-bubble,BX09}
and~\cite{Jarrow-Protter-Shimbo}. A related series of papers
\cite{Andersen-Piterbarg,Lions-Musiela,Hobson,Lewis,et09,bkx10}
and~\cite{Sin}, addressed the issue of bubbles in the context of
stochastic volatility models. In particular, Bayraktar, Kardaras and
Xing~\cite{bkx10} gave necessary and sufficient conditions for linear
partial differential equations appearing in the context of stochastic
volatility models to have a unique solution.

In contrast, we show that the quantile hedging problem, which is
equivalent to an optimal control problem, is the smallest nonnegative
viscosity supersolution to a fully nonlinear PDE. As in the linear
case, these PDEs may not have a unique solution, and, therefore, an
alternative characterization for the value function needs to be
provided. Recently, the authors of
\cite{bkx09,FK10} and~\cite{Karatzas-Kardaras} also considered
stochastic control problems in
this framework. The first reference solves the classical utility
maximization problem; the second one solves the optimal stopping
problem; whereas the third one determines the optimal arbitrage under
model uncertainty, which is equivalent to solving a zero-sum stochastic game.

The structure of the paper is simple: in Section \ref{eqmodel}, we
formulate the problem. In this section we also discuss the implications
of assuming the existence of a local martingale deflator. In
Section \ref{secquantile-hedging}, we generalize the results of
\cite{FL99} on quantile hedging, in particular the Neyman--Pearson
lemma. We
also prove other properties of the value function such as convexity.
Section \ref{secPDE-characterization} is where we give the PDE
characterization of the value function.

\section{\texorpdfstring{The model.}{The model}}\label{eqmodel}

We consider a financial market with a bond which is always equal to
$1$, and $d$ stocks $X=(X_1,\ldots, X_d)$
which satisfy
%
\begin{eqnarray}
\label{eqstk}
d X_i(t) &=& X_i(t) \Biggl( b_i(X(t)) \,dt + \sum_{k=1}^d s_{ik} (X(t)) \,d
W_k(t)\Biggr),\nonumber\\[-8pt]\\[-8pt]
&&\eqntext{i =1, \ldots, d, X(0)=x=(x_1, \ldots, x_d),}
\end{eqnarray}
where $W(\cdot):=(W_1(\cdot),\ldots,W_d(\cdot))$ is a $d$-dimensional
Brownian motion.

Following the set up in~\cite{FK10OA}, Section 8, we make the following
assumption.

\begin{asm}\label{asfassp}
Let $b_i\dvtx (0,\infty)^d \to\R$ and $s_{ik}\dvtx(0,\infty)^d \to\R$ be
continuous functions. Set $b(\cdot)=(b_1(\cdot), \ldots, b_d(\cdot))'$
and $s(\cdot)=(s_{ij}(\cdot))_{1 \leq i, j \leq d}$, which we assume to
be invertible for all $x \in(0,\infty)^d$. We also assume that (\ref
{eqstk}) has a weak solution that is unique in distribution for every
initial value. Let $(\Omega,\mathcal{F},\P)$\vadjust{\goodbreak} denote the probability
space specified by a weak solution. 
Another assumption we will impose is that
%
\begin{equation}\label{eqass-mrp}
\sum_{i=1}^{d}\int_0^{T}\bigl(|b_i(X(t))|+a_{ii}(X(t))+\theta
_i^{2}(X(t))\bigr)\,dt<\infty,\qquad
\mathbb{P}\mbox{-a.s},
\end{equation}
where $\theta(\cdot):=s^{-1}(\cdot)b(\cdot)$, $a_{ij}(\cdot):=\sum
_{k=1}^d s_{ik}(\cdot)s_{jk}(\cdot)$.
\end{asm}

We will denote by $\mathbb{F}=\{\mathcal{F}_t\}_{t\ge0}$ the
right-continuous version of the natural filtration generated by
$X(\cdot
)$, and by $\mathbb{G}$ the $\mathbb{P}$-augmentation of the
filtration~$\mathbb{F}$. Thanks to Assumption \ref{asfassp}, the Brownian motion
$W(\cdot
)$ of (\ref{eqstk}) is adapted to~$\mathbb{G}$ (see, e.g.,
\cite{FK10OA}, Section 2), every local martingale of $\mathbb{F}$ has the
martingale representation property; that is, it can be represented as a
stochastic integral, with respect to $W (\cdot)$, of some $\mathbb
{G}$-progressively measurable integrand (see, e.g., the discussion on
page~1185 in~\cite{FK10OA}), the solution of (\ref{eqstk}) takes
values in the positive orthant and the exponential local martingale
%
\begin{eqnarray}
\label{eqdfl}
Z(t) := \exp\biggl\{ -\int_0^t \theta(X(s))' \,d W(s) -
\frac1 2 \int_0^t |\theta(X(s))|^2 \,ds \biggr\},\nonumber\\[-8pt]\\[-8pt]
&&\eqntext{0 \leq t<\infty,}
\end{eqnarray}
the so-called \textit{deflator} is well defined. We do not exclude the
possibility that~$Z(\cdot)$ is a strict local martingale.

Let $\mathcal{H}$ be the set of $\mathbb{G}$-progressively measurable
processes \mbox{$\pi\dvtx[0,T)\times\Omega\to\R^d$}, which satisfies
\[
\int_0^{T} \bigl(|\pi(t)'\mu(X(t))|+\pi(t)'\alpha(X(t))\pi
(t)
\bigr)\,dt<\infty, \qquad\mathbb{P}\mbox{-a.s.},
\]
in which $\mu=(\mu_1, \ldots, \mu_d)$ and $\sigma= (\sigma_{ij})_{1
\leq i, j \leq d}$ with
$\mu_i(x)=b_i(x)x_i$, $\sigma_{ik}(x)=s_{ik}(x)x_i$ and $\alpha
(x)=\sigma(x)\sigma(x)'$.

At time $t$, an investor invests $\pi_i(t)$ proportion of his wealth in
the $i$th stock. The proportion $1-\sum_{i=1}^d \pi_i(t)$ gets invested
in the bond.
For each $\pi\in\mathcal{H}$ and initial wealth $y \geq0$ the
associated wealth process will be denoted by $Y^{y,\pi}(\cdot)$. This
process solves
\[
d Y^{y,\pi}(t) = Y^{y,\pi}(t) \sum_{i=1}^d
\pi_i(t)\frac{d X_i(t)}{X_i(t)},\qquad
Y^{y,\pi}(0)=y.
\]

It can be easily seen that $Z(\cdot) Y^{y,\pi}(\cdot)$ is a positive
local martingale for any \mbox{$\pi\in\mathcal{H}$}. Let $g\dvtx(0,\infty)^d
\to
(0,\infty)$ be a measurable function satisfying
%
\begin{equation}\label{eqasmong}
\mathbb{E}[Z(T)g(X(T))]<\infty
\end{equation}
and define
\[
V(T,x,1):=\inf\{y>0\dvtx \exists \pi(\cdot) \in\mathcal{H} \mbox
{ s.t. } Y^{y,\pi}(T) \geq g(X(T))\}.
\]
Thanks to Assumption \ref{asfassp}, we have that $V(T,x,1) =
\mathbb{E}[Z(T)g(X(T))]$; see, for example,~\cite{FK08a}, Section 10.
Note that if $g$ has linear growth, then~(\ref{eqasmong}) is satisfied
since the process $ZX$ is a positive supermartingale.

\subsection{\texorpdfstring{A digression: What does the existence of a local martingale
deflator entail\textup{?}}{A digression: What does the existence of a local martingale
deflator entail}}
Although we do not assume the existence of equivalent local martingale
measures, we assume the existence of a local martingale deflator. This
is equivalent to the \textit{No Unbounded Profit with Bounded Risk}
(NUPBR) condition; see~\cite{Karatzas-Kardaras}, Theorem~4.12.
NUPBR is defined as follows:
a sequence $(\pi^n)$ of admissible portfolios is said to generate a
UPBR if $\lim_{m \to\infty} \sup_{n}\mathbb{P}[Y^{1,\pi^n}(T)>m]>0$.
If no such sequence exists, then we say that NUPBR holds; see
\cite{Karatzas-Kardaras}, Proposition 4.2. In fact, the so-called
\textit{No Free Lunch with Vanishing Risk} (NFLVR) is equivalent to NUPBR plus the
classical \textit{no-arbitrage} assumption. Thus, in our setting (since
we assumed the existence of local martingale deflators) although
arbitrages exist they remain on the level of ``cheap thrills,'' which
was coined by~\cite{lw2000}. (Note that the results of Karatzas and
Kardaras~\cite{Karatzas-Kardaras} also imply that one does not need
NFLVR for the portfolio optimization problem of an individual to be
well defined. One merely needs the NUPBR condition to hold.) The
failure of no-arbitrage means that the money market is not an optimal
investment and is dominated by other investments. It follows that a
short position in the money market and long position in the dominating
assets leads one to arbitrage. However, one cannot scale the arbitrage
and make an arbitrary profit because of the admissibility constraint,
which requires the wealth to be positive. This is what is contained in
NUPBR, which holds in our setting. Also, see~\cite{kFTAP}, where these
issues are further discussed.

\section{\texorpdfstring{On quantile hedging.}{On quantile hedging}}\label{secquantile-hedging}
In this section, we develop new probabilistic tools to extend results
of F{\"o}llmer and Leukert~\cite{FL99} on quantile hedging
to settings where equivalent martingale measures need not exist. This
is not only mathematically intriguing, but also economically important
because it admits arbitrage in the market, which opens the door to the
notion of optimal arbitrage, recently introduced in Fernholz and
Karatzas~\cite{FK10OA}. The tools in this section facilitate the
discussion of quantile hedging under the context of optimal arbitrage,
leading us to generalize the results of~\cite{FK10OA} on this sort of
probability-one outperformance.

We will try to determine
%
\begin{equation}
\label{eqprb}\quad
V(T,x,p) = \inf\bigl\{y>0 | \exists \pi\in\mathcal{H} \mbox{ s.t. }
\mathbb{P} \{Y^{y, \pi}(T) \ge g(X(T))\} \ge p\bigr\}
\end{equation}
for $p \in[0,1]$. Note that the set on which we take infimum in (\ref
{eqprb}) is nonempty. Indeed, under condition (\ref{eqasmong}), there
exists $\pi\in\mathcal{H}$ such that $Y^{y,\pi}(T)=g(X(T))$ a.s., where
$y:=\mathbb{E}[Z(T)g(X(T))]$; see, for example,~\cite{FK08a},
Section 10. It follows that for any $p\in[0,1]$,
\[
\mathbb{P}\{Y^{y,\pi}(T) \ge g(X(T))\} = 1 \ge p.\vadjust{\goodbreak}
\]
Also observe that
\begin{eqnarray*}
\widetilde{V}(T,x,p):\!&=&\frac{V(T,x,p)}{g(x)}\\
&=&\inf\bigl\{r>0 | \exists
\pi
\in\mathcal{H} \mbox{ s.t. }
\mathbb{P} \bigl\{Y^{r g(x), \pi}(T) \ge g(X(T))\bigr\} \ge p\bigr\}.
\end{eqnarray*}
When $g(x)=\sum_{i=1}^{d}x_i$, observe that $\widetilde{V}(T,x,1)$ is
equal to equation (6.1) of~\cite{FK10OA}, the smallest relative amount
to beat the market capitalization\break $\sum_{i=1}^d X_i(T)$.

\begin{rem}
Clearly,
%
\begin{equation} \label{eqest1}\quad
0 = V(T,x,0) \le V(T,x,p) \nearrow V(T,x,1) \leq g(x)\qquad
\mbox{as } p\to1.
\end{equation}
\end{rem}

By analogy with~\cite{FL99}, we shall present a probabilistic
characterization of
$V(T,x,p)$. First, we will generalize the Neyman--Pearson lemma (see,
e.g.,~\cite{MR2169807}, Theorem A.28) in the next result.
%
\begin{lem}
\label{t-pchar}
Suppose that Assumption \ref{asfassp} holds, and $g$ satisfies (\ref
{eqasmong}). Let
$A \in\mathcal{F}_T$ satisfy
%
\begin{equation}
\label{eqt-pchar2}
\mathbb{P}(A) \geq p.
\end{equation}
Then
%
\begin{equation}
\label{eqt-pchar3}
V(T,x,p) \le\mathbb{E} [Z(T)
g(X(T)) 1_{A}].
\end{equation}
Furthermore, if $A \in\mathcal{F}_T$ satisfies (\ref{eqt-pchar2})
with equality and
%
\begin{equation}
\label{eqt-pchar1}
\mathop{\operatorname{ess}\operatorname{sup}}_{A} \{Z (T) g(X(T))\}
\le\mathop{\operatorname{ess}\operatorname{inf}}_{A^c}
\{Z (T) g(X(T))\},
\end{equation}
then $A$ satisfies (\ref{eqt-pchar3}) with equality.
\end{lem}
\begin{pf}
Under Assumption \ref{asfassp}, since $g(X(T)) 1_A \in\mathcal{F}_T$
satisfies condition~(\ref{eqasmong}), it is replicable with initial
capital $y:=\mathbb{E}[Z(T) g(X(T)) 1_A ]$; see, for example, Section
10.1 of~\cite{FK08a}. That is, there exists $\pi\in\mathcal{H}$ such
that $Y^{y,\pi}(T)=g(X(T)) 1_A$ a.s. Now if $\mathbb{P}(A) \ge p$, we
have $\mathbb{P}\{Y^{y,\pi}(T)\ge g(X(T))\} = \mathbb{P}\{1_A\ge1\}
\ge
p$. Then it follows from (\ref{eqprb}) that
$V(T,x,p) \le y = \mathbb{E}[Z(T) g(X(T)) 1_A]$.

Now, take an arbitrary pair $(y_0, \pi_0)$ of initial capital and
admissible portfolio that replicates
$g(X(T))$ with probability greater than or equal to $p$; that is,
\[
\mathbb{P}\{B\} \ge p\qquad \mbox{where } B \triangleq
\{Y^{y_0,\pi_0}(T) \ge g(X(T))\}.
\]
Let $A\in\mathcal{F}_T$ satisfy $p=\P(A)\le\P(B)$ and
(\ref{eqt-pchar1}).
To prove equality in (\ref{eqt-pchar3}), it is enough to show that
\[
y_0
\ge\mathbb{E} [Z(T) g(X(T)) 1_{A}],
\]
which can be shown as follows:
\begin{eqnarray*}
y_0 & \geq & \mathbb{E}[Z(T)Y^{y_0, \pi_0}(T)]
= \mathbb{E}[Z(T) Y^{y_0, \pi_0}(T)1_B] +
\mathbb{E}[Z(T)Y^{y_0, \pi_0}(T)1_{B^c}] \\
& \ge & \mathbb{E} [Z(T) g(X(T)) 1_B] = \mathbb{E} [Z(T) g(X(T)) 1_{A
\cap B}] + \mathbb{E}
[Z(T) g(X(T)) 1_{A^c \cap B}] \\
& \ge & \mathbb{E} [Z(T) g(X(T)) 1_{A\cap B}] +
\mathbb{P}(A^c\cap B)
\mathop{\operatorname{ess}\operatorname{inf}}_{A^c \cap B} \{Z(T) g(X(T))\}
\\
& \ge & \mathbb{E} [Z(T) g(X(T)) 1_{A\cap B}] +
\mathbb{P}(A\cap B^c) \mathop{\operatorname{ess}\operatorname{sup}}_{A\cap B^c} \{Z(T) g(X(T))\} \\
& \ge & \mathbb{E} [Z(T) g(X(T)) 1_{A \cap B}] + \mathbb{E}
[Z(T) g(X(T)) 1_{A \cap B^c}] \\
& = & \mathbb{E} [Z(T) g(X(T)) 1_{A}],
\end{eqnarray*}
where in the fourth inequality we use the following two observations: First,
$\mathbb{P}(A^c \cap B) = \mathbb{P}(A\cup B) - \mathbb{P}(A)
\ge\mathbb{P}(A\cup B) - \mathbb{P}(B) = \mathbb{P}(B^c \cap A)$.
Second,
\begin{eqnarray*}
\mathop{\operatorname{ess}\operatorname{inf}}_{A^c \cap B} \{Z(T) g(X(T))\}
&\ge&
\mathop{\operatorname{ess}\operatorname{inf}}_{A^c}
\{Z(T) g(X(T))\} \\
&\ge& \mathop{\operatorname{ess}\operatorname{sup}}_{A} \{Z(T) g(X(T))\} \\
&\ge&
\mathop{\operatorname{ess}\operatorname{sup}}_{A \cap B^c} \{Z(T) g(X(T))\},
\end{eqnarray*}
in which the second inequality follows from (\ref{eqt-pchar1}).
\end{pf}

Let $F(\cdot)$ be the cumulative distribution function of
$Z(T)g(X(T))$, and, for any $a \in\R_+$, define
\[
A_a := \{\omega\dvtx Z(T) g(X(T)) <a\},\qquad \partial A_a :=
\{\omega\dvtx Z(T) g(X(T)) =a\},
\]
and let $\bar A_a$ denote $A_a \cup\partial A_a$; that is,
%
\begin{equation}\label{barAa}
\bar A_a = \{\omega\dvtx Z(T) g(X(T)) \le a\}.
\end{equation}
Taking $A =
\bar A_a$ in Lemma \ref{t-pchar}, we see that (\ref{eqt-pchar1}) is
satisfied. It follows that
%
\begin{equation}
\label{equrep1}
V(T,x,F(a)) =\mathbb{E}[Z(T) g(X(T)) 1_{\bar A_a}].
\end{equation}
On the other hand, taking $A = A_a$, we see that (\ref{eqt-pchar1}) is
again satisfied. We therefore obtain
%
\begin{equation}
\label{equrep1-1}
V(T,x,F(a-)) = \mathbb{E}[Z(T) g(X(T)) 1_{A_a}].
\end{equation}
The last two equalities imply the following relationship:
%
\begin{eqnarray}
\label{equrep2}
V(T,x,F(a)) &=& V(T,x,F(a-)) + a \mathbb{P}\{\partial
A_a\} \nonumber\\[-8pt]\\[-8pt]
&=& V(T,x,F(a-)) + a \bigl(F(a) - F(a-)\bigr).
\nonumber
\end{eqnarray}
Next, we will determine $V(T,x,p)$ for $p \in(F(a-), F(a))$ when
\mbox{$F(a-) < F(a)$}.

\begin{prop}
\label{c-pchar} Suppose Assumption \ref{asfassp} holds. Fix an
$(x,p)\in\break
(0,\infty)^d \times[0,1]$.\vadjust{\goodbreak}
\begin{longlist}
\item There exists $A\in\mathcal{F}_T$ satisfying
(\ref{eqt-pchar2}) with equality and (\ref{eqt-pchar1}). As
a~result, (\ref{eqt-pchar3}) holds with equality.
\item If $F^{-1}(p) := \{s \in\R_+\dvtx F(s) = p\} =\varnothing$,
then letting $a: = \inf\{s \in\R_+\dvtx\break F(s) >p\}$
we have
%
\begin{eqnarray}
\label{eqc-pchar1}
V(T,x,p) & = & V(T,x,F(a-)) + a \bigl(p - F(a-)\bigr) \nonumber\\[-8pt]\\[-8pt]
& = & V(T,x,F(a)) - a\bigl(F(a) - p\bigr).
\nonumber
\end{eqnarray}
\end{longlist}
\end{prop}
\begin{pf}
(i) If there exists $a\in\mathbb{R}$ such that either $F(a) =
p$ or $F(a-) =
p$, then we can take $A=A_a$ or $A=\bar{A}_a$, thanks to (\ref
{equrep1}) and
(\ref{equrep1-1}). In the rest of the proof we will assume that
$F^{-1}(p)=\varnothing$.

Let $\widetilde{W}$ be a Brownian motion with respect to $\mathbb{F}$,
and define $B_b =
\{\omega\dvtx \frac{\wW(T)}{\sqrt{T}} <b\}$. Let us define $f(\cdot)$ by
$f(b) =
\mathbb{P}\{\partial A_a \cap B_b\}$. The function $f$ satisfies $\lim
_{b\to-\infty} f(b)
= 0$ and $\lim_{b\to\infty} f(b) = \mathbb{P}(\partial A_a)$.
Moreover, the function $f(\cdot)$ is continuous and nondecreasing. Right
continuity can be shown as follows: for $\varepsilon>0$,
\[
0\le f(b+\varepsilon) - f(b) = \mathbb{P}(\partial A_a\cap
B_{b+\varepsilon} ) - \mathbb{P}(\partial A_a \cap B_b) \le
\mathbb{P}(B_{b+\varepsilon} \cap B_b^c).
\]
The right continuity follows from observing that the last expression
goes to zero as $\varepsilon\to0$.
One can show left continuity of $f(\cdot)$ in a similar fashion.

Since $0<p - \mathbb{P}(A_a) < \mathbb{P}(\partial A_a)$, thanks to
the above properties of $f$, there
exists $b^* \in\R$ satisfying $f(b^*) = p - \mathbb{P}(A_a)$.

Define $A := A_a \cup(\partial A_a \cap B_{b^*})$. Observe that
$\mathbb{P}(A) = \mathbb{P}(A_a) + \mathbb{P}(\partial A_a \cap
B_{b^*}) = p$ and that $A$ satisfies (\ref{eqt-pchar1}).

(ii) This follows immediately from (1):
\begin{eqnarray*}
V(T,x,p) & = & \mathbb{E}[ Z(T) g(X(T))
1_A] \\
& = & \mathbb{E}[Z(T)g(X(T)) 1_{A_a}] +
\mathbb{E}[Z(T)g(X(T)) 1_{\partial A_a
\cap B_{b^*}}] \\
& = & V(T,x,F(a-)) + a \mathbb{P}(\partial A_a \cap
B_{b^*}) \\
& = & V(t,x,F(a-)) + a\bigl(p-F(a-)\bigr).
\end{eqnarray*}
\upqed\end{pf}
%
\begin{rem}\label{remrafl}
Note that when $Z$ is a martingale, using the Neyman--Pearson lemma, it
was shown in~\cite{FL99} that
%
\begin{equation}\label{eqamp}
V(T,x,p)=\inf_{\varphi\in\M}\E[Z(T) g(X(T)) \varphi] =\E[Z(T) g(X(T))
\varphi^*],
\end{equation}
where
%
\begin{equation}
\label{eqdefm}
\mathcal{M} = \{\varphi\dvtx \Omega\to[0,1] | \mathcal{F}_T
\mbox{ measurable}, \mathbb{E} [\varphi] \geq p\}.
\end{equation}
The randomized test function $\varphi^*$ is not necessarily an
indicator function. Using Lemma \ref{t-pchar} and the fine structure of
the filtration $\F_T$, we provide in Proposition~\ref{c-pchar} another
optimizer of (\ref{eqamp}) which is an indicator function.
\end{rem}

\begin{prop}
\label{c-convex}
Suppose Assumption \ref{asfassp} holds. Then, the map $p\mapsto
V(T,x,p)$ is
convex and continuous on the closed interval $[0,1]$. Hence, $V(T,x,p)
\le p V(T,x,1) \leq p g(x)$ for all $p\in[0,1]$.
\end{prop}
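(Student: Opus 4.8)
The plan is to show that $p\mapsto V(T,x,p)$ coincides with the primitive of the quantile function of $Z(T)g(X(T))$; convexity and continuity are then immediate, and the last inequality follows by evaluating convexity at the endpoints $p=0,1$.

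Write $\xi:=Z(T)g(X(T))$, let $F$ be its distribution function (introduced above), and let $Q(u):=\inf\{s\ge 0:F(s)\ge u\}$ be the associated quantile function, which is nonnegative and nondecreasing on $(0,1)$. By Fubini's theorem (the ``layer-cake'' identities) one has, for every $a\ge 0$,
\[
\E[\xi\, 1_{\{\xi\le a\}}]=\int_0^{F(a)}Q(u)\,du,\qquad \E[\xi\,1_{\{\xi<a\}}]=\int_0^{F(a-)}Q(u)\,du,
\]
and, letting $a\to\infty$, $\int_0^1 Q(u)\,du=\E[\xi]=V(T,x,1)$, which is finite by \eqref{eq:asmong}.

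The key step is to prove that $V(T,x,p)=\int_0^p Q(u)\,du$ for every $p\in[0,1]$, which I would do by locating $p$ relative to $F$. The case $p=1$ is the last identity above. For $p<1$, put $a:=\inf\{s:F(s)>p\}<\infty$, so that $F(a-)\le p\le F(a)$. If $p=F(a)$, then \eqref{eq:urep1} gives $V(T,x,p)=\E[\xi\,1_{\{\xi\le a\}}]=\int_0^p Q(u)\,du$; if $p=F(a-)$, then \eqref{eq:urep1-1} gives $V(T,x,p)=\E[\xi\,1_{\{\xi<a\}}]=\int_0^p Q(u)\,du$; and if $F(a-)<p<F(a)$, then $F^{-1}(p)=\emptyset$ and Proposition~\ref{c-pchar}(ii) together with \eqref{eq:urep1-1} yields
\[
V(T,x,p)=\E[\xi\,1_{\{\xi<a\}}]+a\,(p-F(a-))=\int_0^{F(a-)}Q(u)\,du+\int_{F(a-)}^p Q(u)\,du=\int_0^p Q(u)\,du,
\]
since $Q\equiv a$ on $(F(a-),F(a))$. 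The only real obstacle here is the bookkeeping: verifying that these cases are exhaustive, handling one-sided limits of $F$ and null sets correctly, and checking the generalized-inverse identities — routine but not to be skipped.

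With the representation $V(T,x,p)=\int_0^p Q(u)\,du$ in hand, the conclusions follow at once. Since $Q$ is nondecreasing, its primitive is convex on $[0,1]$; since $\int_0^1 Q(u)\,du=V(T,x,1)<\infty$, i.e. $Q\in L^1([0,1])$, the primitive is absolutely continuous and hence continuous on the closed interval $[0,1]$. Finally, using $V(T,x,0)=0$ and convexity,
\[
V(T,x,p)=V\!\big(T,x,(1-p)\cdot 0+p\cdot 1\big)\le (1-p)\,V(T,x,0)+p\,V(T,x,1)=p\,V(T,x,1)\le p\,g(x),
\]
where the last inequality is \eqref{eq:est1}.
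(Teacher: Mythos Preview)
Your proof is correct and takes a genuinely different route from the paper. The paper argues \emph{midpoint} convexity directly from the set-theoretic representation in Proposition~\ref{c-pchar}: given $p_1<\tilde p<p_2$ it produces nested sets $A_1\subset\tilde A\subset A_2$ satisfying \eqref{eq:t-pchar1}, and then compares $\E[\xi\,1_{A_2\cap\tilde A^c}]$ with $\E[\xi\,1_{\tilde A\cap A_1^c}]$ via the essential-sup/essential-inf ordering; full convexity is then deduced from midpoint convexity by Ostroski's theorem, and left-continuity at $p=1$ is checked separately by dominated convergence. You instead establish the closed-form representation $V(T,x,p)=\int_0^p Q(u)\,du$ with $Q$ the quantile function of $\xi=Z(T)g(X(T))$, after which convexity (since $Q$ is nondecreasing) and continuity on all of $[0,1]$ (since $Q\in L^1$) are immediate. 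Your approach yields more: it identifies $V(T,x,\cdot)$ explicitly as the primitive of the quantile function, which also shows, for instance, that the right derivative $\partial_p^+V(T,x,p)$ equals $Q(p+)$. The paper's approach, on the other hand, avoids the quantile machinery entirely and stays within the tools already developed (Lemma~\ref{t-pchar} and Proposition~\ref{c-pchar}). Both derive the final inequality $V(T,x,p)\le pV(T,x,1)\le pg(x)$ the same way, from convexity and the endpoint values.
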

\begin{pf}
By Proposition \ref{c-pchar}, for any $p\in[0,1]$ there exists $A\in
\mathcal{F}_T$ such that
\[
V(T,x,p) = \mathbb{E}[Z(T)g(X(T))1_A] \le\mathbb{E}[Z(T)g(X(T))] <
\infty.
\]
Then thanks to a theorem by Ostroski (see~\cite{Donoghue}, page 12), to
show the convexity it suffices to demonstrate the midpoint convexity
%
\begin{eqnarray}
\label{eqc-convex1}
&&
\frac{V(T,x,p_1) + V(T,x,p_2)}{2} \nonumber\\[-8pt]\\[-8pt]
&&\qquad\ge V\biggl(T,x, \frac
{p_1+p_2}{2}\biggr)\qquad
\mbox{for all } 0\le p_1< p_2 \le1.\nonumber
\end{eqnarray}
Denote $\tilde p \triangleq\frac{p_1 + p_2}{2}$. It follows from
Proposition \ref{c-pchar} that there exist \mbox{$A_1 \subset\tilde A
\subset A_2$} with $\P(A_1)=p_1<\P(\tilde{A})=\tilde{p}<\P(A_2)=p_2$ satisfying
(\ref{eqt-pchar1}),
\[
V(T,x,p_i) = \mathbb{E}[Z(T)g(X(T))
1_{A_i}],\qquad i = 1,2,
\]
and
\[
V(T,x,\tilde p) =
\mathbb{E}[Z(T)g(X(T)) 1_{\tilde A}].
\]
By (\ref{eqt-pchar1}),
\begin{eqnarray*}
\mathop{\operatorname{ess}\operatorname{inf}} \{Z(T)g(X(T)) 1_{A_2 \cap\tilde A^c} \}
& \ge & \mathop{\operatorname{ess}\operatorname{inf}} \{Z(T)g(X(T)) 1_{\tilde A^c} \}\\
&\ge&
\mathop{\operatorname{ess}\operatorname{sup}} \{Z(T)g(X(T)) 1_{\tilde A}\} \\
&\ge&\mathop{\operatorname{ess}\operatorname{sup}} \{ Z(T)g(X(T)) 1_{\tilde A \cap A_1^c} \},
\end{eqnarray*}
which implies that
\[
\mathbb{E}[Z(T)g(X(T)) 1_{A_2 \cap\tilde A^c}] \ge
\mathbb{E} [Z(T)g(X (T)) 1_{\tilde A \cap A_1^c} ].
\]
As a result,
\begin{eqnarray*}
&&
\mathbb{E}[Z(T)g(X(T)) 1_{A_2}] -
\mathbb{E}[Z(T)g(X(T)) 1_{\tilde A}] \\
&&\qquad\ge\mathbb{E} [Z(T)g(X(T))
1_{\tilde A} ] -
\mathbb{E} [Z(T)g(X(T)) 1_{A_1} ],
\end{eqnarray*}
which is equivalent to (\ref{eqc-convex1}).

Now thanks to convexity, we immediately have that $p\mapsto V(T,x,p)$
is continuous on $[0,1)$. It remains to show that it is continuous
from the left at $p=1$; but this is indeed true because
\begin{eqnarray*}
\lim_{a\to\infty}V(T,x,F(a)) &=& \lim_{a\to\infty}\mathbb
{E}\bigl[Z(T)g(X(T))1_{\{Z(T)g(X(T))\le a\}}\bigr]\\
&=&\mathbb{E}[Z(T)g(X(T))] = V(T,x,1),
\end{eqnarray*}
where the second equality is due to the dominated convergence theorem.
\end{pf}

\begin{exm}
\label{e-bessel}
Consider a market with a single stock, whose dynamics follow a
three-dimensional Bessel process,
that is,
\[
d X(t) = \frac{1}{X(t)} \,dt + dW(t),\qquad X_0=x>0,
\]
and let $g(x)=x$.
In this case $Z(t) =x/X(t)$, which is the classical example for a
strict local martingale; see~\cite{Johnson-Helms}.
On the other hand, $Z(t) X(t)=x$ is a martingale. Thanks to
Proposition \ref{c-pchar} there exists a set $A \in\F_T$ with $\P
(A)=p$ such that
\[
V(T,x,p) = \mathbb{E}[Z(T) X(T) 1_{A}] = p x.
\]
\end{exm}

In~\cite{FL99}, the following result was proved when $Z$ is a
martingale. Here, we generalize this result to the case where $Z$ is
only a local martingale.

\begin{prop}\label{p-urep}
Under Assumption \ref{asfassp}
%
\begin{equation}
\label{eqp-urep1}
V(T,x,p) = \inf_{\varphi\in\mathcal{M}}
\mathbb{E}[Z(T) g(X(T)) \varphi],
\end{equation}
where $\mathcal{M}$ is defined in (\ref{eqdefm}).
\end{prop}
\begin{pf}
Thanks to Proposition \ref{c-pchar} there exists a set
$A \in\F_T$ satisfying $\mathbb{P}(A)=p$ and (\ref{eqt-pchar1}) such
that $V(T,x,p) = \mathbb{E}[Z(T) g(X(T)) 1_A] $.
Since $1_A \in\mathcal{M}$, clearly
\[
V(T,x,p) \geq
\inf_{\varphi\in\mathcal{M}} \mathbb{E}[Z(T) g(X(T))
\varphi].
\]

For the other direction, it is enough to show that for any $\varphi\in
\mathcal{M}$, we have
\[
\mathbb{E}[Z(T) g(X(T)) 1_A] \le\mathbb{E}[Z(T)
g(X(T)) \varphi].
\]
Indeed, since the left-hand side is actually $V(T,x,p)$, we can get the
desired result by taking infimum on both sides over $\varphi\in
\mathcal{M}$.

Letting $M = \mathop{\operatorname{ess}\operatorname{sup}}_A \{Z(T)
g(X(T))\}$, we observe that
\begin{eqnarray*}
&&\mathbb{E}[Z(T) g(X(T)) \varphi] - \mathbb{E}[Z(T)
g(X(T)) 1_A] \\
&&\qquad= \mathbb{E}[Z(T) g(X(T)) \varphi1_A] +
\mathbb{E}[Z(T) g(X(T)) \varphi1_{A^c}] -
\mathbb{E}[Z(T) g(X(T))1_A] \\
&&\qquad= \mathbb{E}[Z(T) g(X(T)) \varphi1_{A^c}] -
\mathbb{E}[Z(T) g(X(T)) 1_A (1 - \varphi)] \\
&&\qquad
\ge\mathop{\operatorname{ess}\operatorname{inf}}_{A^c}
\{Z(T)g(X(T))\}\mathbb{E}[\varphi
1_{A^c}]-M\mathbb{E}[1_A (1-\varphi)]\\
&&\qquad
\ge M \mathbb{E}[\varphi1_{A^c}] - M \mathbb{E}[1_A
(1-\varphi)] \qquad\mbox{[by
(\ref{eqt-pchar1})]} \\
&&\qquad= M\mathbb{E}[\varphi]-M\mathbb{E}[1_A]\geq0.
\end{eqnarray*}
\upqed\end{pf}

\subsection{\texorpdfstring{A digression: Representation of $V$ as a stochastic control
problem.}{A digression: Representation of $V$ as a stochastic control
problem}}\label{secstoc-cont}
For $p\in[0,1]$, we introduce an additional controlled state variable
%
\begin{equation}
\label{eqspp}
P^{p}_{\alpha}(s) = p + \int_{0}^{s}\alpha(r)' \,dW(r),\qquad s\in[0,T],
\end{equation}
where\vspace*{1pt} $\alpha(\cdot)$ is a $\mathbb{G}$-progressively measurable $\R
^d$-valued process satisfying the integrability condition $\int_0^{T}
|\alpha(s)|^2\,ds<\infty$ a.s. such that $P^{p}_{\alpha}$ takes values in
$[0,1]$. We will denote the class of such processes by~$\mathcal{A}$.
Note that~$\mathcal{A}$ is nonempty, as the constant control $\alpha
(\cdot)\equiv(0,\ldots,0)\in\mathbb{R}^d$ obviously lies in~$\mathcal{A}$.
The next result obtains an alternative representation for $V$ in terms
of $P^p_\alpha$.

\begin{prop}\label{t-control}
Under Assumption \ref{asfassp},
%
\begin{equation}
\label{eqcontrol}
V(T,x,p) = \inf_{\alpha\in\mathcal{A}} \mathbb{E}[Z(T) g(X(T))
P^{p}_{\alpha}(T)]<\infty.
\end{equation}
\end{prop}
\begin{pf}
The finiteness follows from (\ref{eqasmong}).
Define
\[
\widetilde{\M}:= \{\varphi\dvtx \Omega\to[0,1] | \mathcal{F}_T
\mbox{ measurable}, \mathbb{E} [\varphi] = p\}.
\]
Thanks to Proposition \ref{c-pchar}, there exists a set
$A \in\F_T$ satisfying $\mathbb{P}(A)=p$ and~(\ref{eqt-pchar1})
such that
\[
V(T,x,p) = \mathbb{E}[Z(T) g(X(T)) 1_A] \ge\inf_{\varphi\in
\widetilde
{\M}}
\mathbb{E}[Z(T) g(X(T)) \varphi].
\]
Since the opposite inequality follows immediately from
Proposition \ref{p-urep}, we conclude that
\[
V(T,x,p) = \inf_{\varphi\in\widetilde{\M}}
\mathbb{E}[Z(T) g(X(T)) \varphi].
\]

Therefore, it is enough to show that $\widetilde{\M}$ satisfies
$\widetilde{\M} = \{ P^{p}_{\alpha}(T) | \alpha\in
\mathcal
{A}\}$. The inclusion $\widetilde{\M}\supset\{
P^{p}_{\alpha
}(T) | \alpha\in
\mathcal{A}\}$ is clear.
To show the other inclusion we will use the Martingale representation
theorem: for any $\varphi\in\widetilde{\M}$ there exists a $\mathbb
{G}$-progressively measurable $\R^d$-valued process $\psi(\cdot)$
satisfying $\int_0^{T} |\psi(s)|^2\,ds<\infty$ a.s. such that
\[
\mathbb{E}[\varphi|\mathcal{F}_t] = p + \int_0^t \psi(s)' \,d
W(s),\qquad
t\in[0,T].
\]
Note that since $\varphi$ takes values in $[0,1]$, so does $\mathbb
{E}[\varphi|\mathcal{F}_t]$ for all $t\in[0,T]$. Then we see that
$\mathbb{E}[\varphi|\mathcal{F}_t]$ satisfies (\ref{eqspp}) with
$\alpha(\cdot)=\psi(\cdot)\in\mathcal{A}$.
\end{pf}

\section{\texorpdfstring{The PDE characterization.}{The PDE characterization}}\label{secPDE-characterization}
\subsection{\texorpdfstring{Notation.}{Notation}}
We denote by $X^{t,x}(\cdot)$ the solution of (\ref{eqstk}) starting
from~$x$ at time $t$ and by $Z^{t,x,z}(\cdot)$ the solution of
%
\begin{equation}
\label{eqdeflate}
d Z(s) = - Z(s) \theta(X^{t,x}(s))' \,dW(s),\qquad Z(t) = z.\vadjust{\goodbreak}
\end{equation}
Define the process $Q^{t,x,q}(\cdot)$ by
%
\begin{equation}
Q^{t,x,q}(\cdot) := \frac{1}{Z^{t,x,(1/q)}(\cdot)},\qquad q\in(0,\infty).
\end{equation}
Then we see from (\ref{eqdeflate}) that $Q(\cdot)$ satisfies
%
\begin{equation}\label{Qdynamics}
\quad\frac{dQ(s)}{Q(s)}=|\theta(X^{t,x}(s))|^{2}\,ds+\theta
(X^{t,x}(s))'\,dW(s),\qquad Q^{t,x,q}(t)=q.
\end{equation}
We then introduce the value function
\[
U(t,x,p) := \inf_{\varphi\in\mathcal{M}} \mathbb{E}[Z^{t,x,1}(T)
g(X^{t,x}(T))\varphi],
\]
where $\mathcal{M}$ is defined in (\ref{eqdefm}). Note that the
original value function $V$ can be written in terms of $U$ as
$V(T,x,p)=U(0,x,p)$.

We also consider the Legendre transform of $U$ with respect to the $p$
variable. To make the discussion clear, however, let us first extend
the domain of the map $p\mapsto U(t,x,p)$ from $[0,1]$ to the entire
real line $\mathbb{R}$ by setting
%
\begin{eqnarray}
\label{eqextU1}
U(t,x,p) &=& 0 \qquad\mbox{for } p<0,\\
\label{eqextU2}
U(t,x,p) &=& \infty\qquad\mbox{for } p>1.
\end{eqnarray}
Then the Legendre transform of $U$ with respect to $p$ is well defined:
%
\begin{eqnarray}\label{eqw=sup}
w(t,x,q):\!&=&\sup_{p\in\mathbb{R}}\{pq-U(t,x,p)\}\nonumber\\[-8pt]\\[-8pt]
&=&\cases{
\infty, &\quad if $q<0$;\cr
\displaystyle \sup_{p\in[0,1]}\{pq-U(t,x,p)\}, &\quad if $q\ge0$.}\nonumber
\end{eqnarray}
From Proposition \ref{c-convex}, we already know that $p\mapsto
U(t,x,p)$ is convex and continuous on $[0,1]$. Since $U(t,x,0)=0$, we
see from (\ref{eqextU1}) and (\ref{eqextU2}) that $p\mapsto
U(t,x,p)$ is continuous on $(-\infty,1]$ and lower semicontinuous on
$\mathbb{R}$. Moreover, considering that $p\mapsto U(t,x,p)$ is
increasing on $[0,1]$, we conclude that $p\mapsto U(t,x,p)$ is also
convex on $\mathbb{R}$. Now thanks to~\cite{VT}, Section 6.18, the convexity
and the lower semicontinuity of $p\mapsto U(t,x,p)$ on $\mathbb{R}$
imply that the double transform of $U$ is indeed equal to $U$ itself.
That is, for any $(t,x,p)\in[0,T]\times(0,\infty)^d\times\mathbb{R}$,
\[
U(t,x,p) = \sup_{q\in\mathbb{R}}\{pq-w(t,x,q)\} = \sup_{q\ge0}\{
pq-w(t,x,q)\},
\]
where the second equality is a consequence of (\ref{eqw=sup}).

In this section, we also consider the function
%
\begin{eqnarray}
\label{eqdeftildew}
\widetilde{w}(t,x,q):\!&=& \mathbb
{E}\bigl[Z^{t,x,1}(T)\bigl(Q^{t,x,q}(T)-g(X^{t,x}(T))\bigr)^+\bigr]\nonumber\\[-8pt]\\[-8pt]
&=&\mathbb
{E}\bigl[\bigl(q-Z^{t,x,1}(T)g(X^{t,x}(T))\bigr)^+\bigr]\nonumber
\end{eqnarray}
for any $(t,x,q)\in[0,T]\times(0,\infty)^d\times(0,\infty)$. We will
show that $w=\widetilde{w}$ and derive various properties of
$\widetilde{w}$.
%
\begin{rem}
From the definition of $\widetilde{w}$ in (\ref{eqdeftildew}),
$\widetilde{w}$ is the upper hedging price for the contingent claim
$(Q^{t,x,q}(T)-g(X^{t,x}(T)))^+$, and potentially solves the linear PDE
%
\begin{equation}\label{PDEdegenerate}
\partial_t \widetilde{w} + \tfrac{1}{2}\operatorname{Tr}(\sigma
\sigma'D^2_x
\widetilde{w}) + \tfrac{1}{2}|\theta|^2 q^2 D^2_q \widetilde{w} + q
\operatorname{Tr}(\sigma\theta D_{xq}\widetilde{w}) = 0.
\end{equation}
This is not, however, a traditional Black--Scholes-type equation
because it is degenerate on the entire space $(x,q)\in(0,\infty
)^d\times
(0,\infty)$. Consider the following function $v$ which takes values in
the space of $(d+1)\times d$ matrices:
\[
v(\cdot):=\left[
\matrix{
s(\cdot)_{d\times d} \cr\hline
\theta(\cdot)'_{1\times d}}\right].
\]
Degeneracy can be seen by observing that $v(x)v(x)'$ is only positive
semi-definite for all $x\in(0,\infty)^d$. Or, one may observe
degeneracy by noting that there are $d+1$ risky assets, $X_1,\ldots
,X_d$ and $Q$, with only $d$ independent sources of uncertainty,
$W_1,\ldots,W_d$. As a result, the existence of classical solutions to
(\ref{PDEdegenerate}) cannot be guaranteed by standard results for
parabolic equations. Indeed, under the setting of Example \ref
{e-bessel}, we have
\[
\widetilde{w}(t,x,q)= \mathbb{E}\bigl[\bigl(q-Z^{t,x,1}(T)X^{t,x}(T)\bigr)^+\bigr]=(q-x)^+,
\]
which is not smooth.
\end{rem}

\subsection{\texorpdfstring{Elliptic regularization.}{Elliptic
regularization}}
In this subsection, we will approximate $\widetilde{w}$ by a sequence
of smooth functions $\widetilde{w}_\varepsilon$, constructed by
elliptic regularization. We will then derive some properties of
$\widetilde{w}_\varepsilon$ and investigate the relation between~$\widetilde{w}$
and $\widetilde{w}_\varepsilon$. Finally, we will show
that $\widetilde{w}=w$, which validates the construction of
$\widetilde
{w}_\varepsilon$.

To perform elliptic regularization under our setting, we need to first
introduce a~product probability space. Recall that we have been working
on a~probability space $(\Omega,\mathbb{F},\mathbb{P})$, given by a
weak solution to the SDE (\ref{eqstk}). Now consider the sample space
$\Omega^B:=C([0,T];\mathbb{R})$ and the canonical process~$B(\cdot)$.
Let $\mathbb{F}^B$ be the filtration generated by $B$ and $\mathbb
{P}^B$ be the Wiener measure on $(\Omega^B,\mathbb{F}^B)$. We then
introduce the product probability space $(\bar{\Omega},\bar{\mathbb
{F}},\bar{\mathbb{P}})$, with $\bar{\Omega}:=\Omega\times\Omega^B$,
$\bar{\mathbb{F}}:=\mathbb{F}\times\mathbb{F}^B$ and $\bar
{\mathbb
{P}}:=\mathbb{P}\times\mathbb{P}^B$. For any\vspace*{1pt} $\bar{\omega}\in\bar
{\Omega
}$, we write $\bar{\omega}=(\omega,\omega^B)$, where $\omega\in
\Omega$
and $\omega^B\in\Omega^B$. Also, we denote by $\bar{\mathbb{E}}$ the
expectation taken under $(\bar{\Omega},\bar{\mathbb{F}},\bar
{\mathbb{P}})$.

For any $\varepsilon>0$, introduce the process $Q^{t,x,q}_\varepsilon
(\cdot)$ which satisfies the following dynamics:
%
\begin{eqnarray}\label{eqdQepsilon}
\frac{dQ_\varepsilon(s)}{Q_\varepsilon(s)} &=& |\theta
(X^{t,x}(s))|^2\,ds +
\theta(X^{t,x}(s))'\,dW(s) + \varepsilon \,dB(s),\nonumber\\[-8pt]\\[-8pt]
Q^{t,x,q}_\varepsilon
&=&q\in(0,\infty).\nonumber
\end{eqnarray}
Then under the probability space $(\bar{\Omega},\bar{\mathbb
{F}},\bar
{\mathbb{P}})$, we have $d+1$ risky assets, the~$d$ stocks $X_1,\ldots
,X_d$ and $Q_\varepsilon$. Define
\[
\bar{s}:=\left[\begin{array}{c@{\quad}c@{\quad}c@{\hspace*{4pt}}|@{\hspace*{4pt}}c}
s_{11}&\cdots &s_{1d}&0\\
\vdots  &\ddots &\vdots &\vdots\\
s_{d1}&\cdots &s_{dd}&0\\ \hline
\theta_{1}&\cdots &\theta_{d}&\varepsilon
\end{array}\right],\qquad
\bar{b}:= \left[\begin{array}{c}
b_{1}\\
\vdots\\
b_{d}\\ \hline
|\theta|^2
\end{array}\right]
\]
and
\[
\bar{a}:=\bar{s}\bar{s}'=
\left[\begin{array}{c@{\quad}c@{\quad}c@{\hspace*{4pt}}|@{\hspace*{4pt}}c}
a_{11} &\cdots & a_{1d} &|\\
\vdots &\ddots &\vdots & s\theta\\[2pt]
a_{d1} &\cdots &a_{dd} &|\\ \hline
- &\theta' s' &- & |\theta|^2+\varepsilon^2\\
\end{array}\right].
\]
Since we assume that the matrix $s$ has full rank (Assumption \ref
{asfassp}), $\bar{s}$ has full rank by definition. It follows that
$\bar{a}$ is positive definite. Now we can define the corresponding
market\vspace*{1pt} price of risk under $(\bar{\Omega},\bar{\mathbb{F}},\bar
{\mathbb
{P}})$ as $\bar{\theta}:=\bar{s}^{-1}\bar{b}$, and the corresponding
deflator $\bar{Z}(\cdot)$ under $(\bar{\Omega},\bar{\mathbb
{F}},\bar
{\mathbb{P}})$ as the solution of
%
\begin{equation}\label{eqbardeflate}
d\bar{Z}(s) = -\bar{Z}(s)\bar{\theta}(X^{t,x}(s))'\,d\bar{W}(s),\qquad
\bar{Z}^{t,x,z}(t)=z,
\end{equation}
where $\bar{W}:=(W_1,\ldots,W_d,B)$ is a $(d+1)$-dimensional Brownian
motion. Observe that
\[
\bar{\theta}
= \left[\begin{array}{c@{\hspace*{4pt}}|@{\hspace*{4pt}}c}
s^{-1}&O_{d\times 1}\\ \hline
-\dfrac{1}{\varepsilon}\theta's^{-1}&\dfrac{1}{\varepsilon}
\end{array}\right]
\left[\begin{array}{c}
b\\
|\theta|^2
\end{array}\right]
=\left[\begin{array}{c}
\theta\\
0
\end{array}\right].
\]
This implies that (\ref{eqbardeflate}) coincides with (\ref
{eqdeflate}). Thus, we conclude that $\bar{Z}(\cdot)=Z(\cdot)$.
Finally, let us introduce the function
\[
\widetilde{w}_\varepsilon(t,x,q):=\bar{\mathbb{E}}\bigl[\bar
{Z}^{t,x,1}(T)\bigl(Q^{t,x,q}_\varepsilon(T)-g(X^{t,x}(T))\bigr)^+\bigr]
\]
for any $(t,x,q)\in[0,T]\times(0,\infty)^d\times(0,\infty)$. By
(\ref{eqdQepsilon}) and (\ref{Qdynamics}), we see that the processes
$Q_\varepsilon(\cdot)$ and $Q(\cdot)$ have the following relation:
%
\begin{eqnarray}\label{QQe}
Q^{t,x,q}_\varepsilon(s) = Q^{t,x,q}(s)\exp\bigl\{-\tfrac
{1}{2}\varepsilon^2(s-t)+\varepsilon\bigl(B(s)-B(t)\bigr)\bigr\},\nonumber\\[-8pt]\\[-8pt]
&&\eqntext{s\in[t,T].}
\end{eqnarray}
It then follows from (\ref{QQe}), the fact that $\bar{Z}(\cdot
)=Z(\cdot
)$ and the definition of~$\widetilde{w}_\varepsilon$ that
%
\begin{eqnarray}\label{twewithnoZ}
\widetilde{w}_\varepsilon(t,x,q)&=&\bar{\mathbb{E}}\bigl[
\bigl(q\exp
\bigl\{-\tfrac{1}{2}\varepsilon^2(T-t)+\varepsilon\bigl(B(T)-B(t)\bigr)\bigr\}\nonumber\\[-8pt]\\[-8pt]
&&\hspace*{84.5pt}{}-Z^{t,x,1}(T)g(X^{t,x}(T))\bigr)^+\bigr].\nonumber
\end{eqnarray}

%
\begin{asm}\label{aslocLips}
The functions $\theta_i$ and $s_{ij}$ are locally Lipschitz, for all
$i,j\in\{1,\ldots,d\}$.
\end{asm}
%
\begin{lem}\label{lemsmoothtwe}
Under Assumption \ref{aslocLips}, we have that $\widetilde
{w}_\varepsilon\in\mathcal{C}^{1,2,2}((0,T)\times(0,\infty
)^{d}\times
(0,\infty))$ and satisfies the PDE
%
\begin{equation}\label{PDEtwe}\quad
\partial_t \widetilde{w}_\varepsilon+ \tfrac{1}{2}\operatorname
{Tr}(\sigma\sigma
'D^2_x \widetilde{w}_\varepsilon) + \tfrac{1}{2}(|\theta
|^2+\varepsilon
^2)q^2D^2_q \widetilde{w}_\varepsilon+ q \operatorname{Tr}(\sigma
\theta
D_{xq}\widetilde{w}_\varepsilon) = 0,\hspace*{-32pt}
\end{equation}
$(t,x,q)\in(0,T)\times(0,\infty)^d\times(0,\infty)$, with the
boundary condition
%
\begin{equation}\label{boundarytwe}
\widetilde{w}_\varepsilon(T,x,q)=\bigl(q-g(x)\bigr)^+.
\end{equation}
\end{lem}
\begin{pf}
Since $\bar{a}$ is positive definite and continuous, it must satisfy
the following ellipticity condition: for every compact set $K\subset
(0,\infty)^d$, there exists a positive constant $C_K$ such that
%
\begin{equation}\label{ellipticity}
\sum_{i=1}^{d+1}\sum_{j=1}^{d+1}\bar{a}_{ij}(x)\xi_i\xi_j \ge
C_K|\xi|^2
\end{equation}
for all $\xi\in\mathbb{R}^{d+1}$ and $x\in K$; see, for example,~\cite{HS},
Lemma 3. Under Assumption \ref{aslocLips} and (\ref{ellipticity}),
the smoothness of $\widetilde{w}_\varepsilon$ and the PDE (\ref
{PDEtwe}) follow immediately from~\cite{Ruf2010}, Theorem 4.2.
Finally, note
that $\widetilde{w}_\varepsilon$ satisfies the boundary condition by
definition.
\end{pf}
%
\begin{prop}\label{propstrictconvex}
For any $(t,x)\in[0,T]\times(0,\infty)^d$, the map $q\mapsto
\widetilde
{w}_\varepsilon(t, x,q)$ is strictly convex on $(0,\infty)$. More
precisely, the map $q\mapsto D_q\widetilde{w}_\varepsilon(t,x,q)$ is
strictly increasing on $(0,\infty)$ with
\[
\lim_{q\downarrow0}D_q\widetilde{w}_\varepsilon(t,x,q)=0
\quad\mbox
{and}\quad
\lim_{q\to\infty}D_q\widetilde{w}_\varepsilon(t,x,q)=1.
\]
\end{prop}
\begin{pf}
We will first compute $D_q\widetilde{w}_\varepsilon(t,x,q)$, and then
show that it is strictly increasing in $q$ from $0$ to $1$. Let
$L_\varepsilon(t,T) := \exp(-\frac{1}{2}\varepsilon
^2(T-t)+\varepsilon(B(T)-B(t)))$ and $\widetilde{A}_a := \{
\bar
{\omega}\dvtx Z^{t,x,1}(T)g(X^{t,x}(T))\le aL_\varepsilon(t,T)\}$ for
$a\ge
0$. Fix an arbitrary \mbox{$q>0$}. For any $\delta>0$, define
\[
E^\delta:=\{\bar{\omega}\dvtx qL_\varepsilon(t,T) <
Z^{t,x,1}(T)g(X^{t,x}(T)) \le(q+\delta)L_\varepsilon(t,T)\}.
\]
Note that by construction, $\widetilde{A}_q$ and $E^\delta$ are
disjoint, and $\widetilde{A}_{q+\delta}=\widetilde{A}_q\cup E^\delta$.
It follows that
\begin{eqnarray*}
&&\frac{1}{\delta}[\widetilde{w}_\varepsilon
(t,x,q+\delta
)-\widetilde{w}_\varepsilon(t,x,q)]\\
&&\qquad= \frac{1}{\delta}\bigl\{\bar{\mathbb{E}}\bigl[\bigl((q+\delta
)L_\varepsilon(t,T)-Z^{t,x,1}(T)g(X^{t,x}(T))\bigr)1_{\widetilde
{A}_{q+\delta}}\bigr]\\
&&\qquad\quad\hspace*{34pt}{}-\bar{\mathbb{E}}\bigl[\bigl(qL_\varepsilon
(t,T)-Z^{t,x,1}(T)g(X^{t,x}(T))\bigr)1_{\widetilde{A}_q}
\bigr]\bigr\}\\
&&\qquad=
\frac{1}{\delta}\bigl\{\bar{\mathbb{E}}\bigl[\bigl((q+\delta
)L_\varepsilon(t,T)-Z^{t,x,1}(T)g(X^{t,x}(T))\bigr)1_{\widetilde
{A}_q}\bigr]\\
&&\qquad\quad\hspace*{9.7pt}{}+\bar{\mathbb{E}}\bigl[\bigl((q+\delta)L_\varepsilon
(t,T)-Z^{t,x,1}(T)g(X^{t,x}(T))\bigr)1_{E^\delta}\bigr]\\
&&\qquad\quad\hspace*{35.3pt}{} -\bar{\mathbb{E}}\bigl[\bigl(qL_\varepsilon
(t,T)-Z^{t,x,1}(T)g(X^{t,x}(T))\bigr)1_{\widetilde{A}_q}\bigr]\bigr\}
\\
&&\qquad=
\bar{\mathbb{E}}[L_\varepsilon(t,T)1_{\widetilde{A}_q}]+\frac
{1}{\delta
}\bar{\mathbb{E}}\bigl[\bigl((q+\delta)L_\varepsilon
(t,T)-Z^{t,x,1}(T)g(X^{t,x}(T))\bigr)1_{E^\delta}\bigr].
\end{eqnarray*}
By the definition of $E^\delta$,
\begin{eqnarray*}
0 & \leq& \frac{1}{\delta}\bar{\mathbb{E}}\bigl[\bigl((q+\delta
)L_\varepsilon(t,T)-Z^{t,x,1}(T)g(X^{t,x}(T))\bigr)1_{E^\delta
}\bigr]\\
&\leq&\frac{1}{\delta}\bar{\mathbb{E}}[\delta L_\varepsilon(t,T)
1_{E^\delta}]\\
& = & \bar{\mathbb{E}}[L_\varepsilon(t,T)1_{E^\delta}]\to0\qquad \mbox
{as } \delta\downarrow0,
\end{eqnarray*}
where we use the dominated convergence theorem. We therefore conclude that
\[
D_q\widetilde{w}_\varepsilon(t,x,q) = \lim_{\delta\downarrow
0}{\frac
{1}{\delta}[\widetilde{w}_\varepsilon(t,x,q+\delta)-\widetilde
{w}_\varepsilon(t,x,q)]}=\bar{\mathbb{E}}[L_\varepsilon
(t,T)1_{\widetilde{A}_q}].
\]
Thanks to the dominated convergence theorem again, we have
\[
\lim_{q\downarrow0}\bar{\mathbb{E}}[L_\varepsilon
(t,T)1_{\widetilde
{A}_q}]=0 \quad\mbox{and}\quad \lim_{q\to\infty}\bar{\mathbb
{E}}[L_\varepsilon
(t,T)1_{\widetilde{A}_q}]=\bar{\mathbb{E}}[L_\varepsilon(t,T)]=1.
\]

It remains\vspace*{-1pt} to prove that $D_q\widetilde{w}_\varepsilon(t,x,q)=\bar
{\mathbb{E}}[L_\varepsilon(t,T)1_{\widetilde{A}_q}]$ is strictly
increasing in $q$. Note that it is enough to show that the event
$E^\delta$ has positive probability for all $\delta>0$. Under the
integrability condition (\ref{eqass-mrp}), the deflator~$Z(\cdot)$ is
strictly positive with probability 1; see, for example,~\cite{B10},
Section~6. It follows from our assumptions on $g$ [see
(\ref{eqasmong}) and the line before it] that
\[
0<Z^{t,x,1}(T)g(X^{t,x}(T))<\infty, \qquad\mathbb{P}\mbox{-a.s.}
\]
This implies that
%
\begin{equation}\label{eqrangeZgX}
-\infty< \log Z^{t,x,1}(T)g(X^{t,x}(T)) < \infty,\qquad \bar{\P}\mbox{-a.s.}
\end{equation}
Now, from (\ref{eqrangeZgX}) and the definitions $E^\delta$ and
$L_\eps$, we see that $\bar{\P}(E^\delta)$ equals to the
probability of
the event
\begin{eqnarray*}
&&\biggl\{\bar{\omega}\dvtx\frac{\eps}{2}(T-t)+\frac{1}{\eps}\log\frac
{Z^{t,x,1}(T)g(X^{t,x}(T))}{q +\delta} \\
&&\qquad\le B(T)-B(t)<\frac{\eps
}{2}(T-t)+\frac{1}{\eps}\log\frac{Z^{t,x,1}(T)g(X^{t,x}(T))}{q}
\biggr\}.
\end{eqnarray*}
%
Thanks to Fubini's theorem, this probability is strictly positive.
\end{pf}

We investigate the relation between $\widetilde{w}$ and $\widetilde
{w}_\varepsilon$ in the following result.
%
\begin{lem}\label{lemtwtwe}
The functions $\widetilde{w}$ and $\widetilde{w}_\varepsilon$ satisfy
the following relations:
\begin{longlist}
\item For any $(t,x,q)\in[0,T]\times(0,\infty)^d\times
(0,\infty)$,
\[
\widetilde{w}(t,x,q) = \lim_{\varepsilon\downarrow0} \widetilde
{w}_\varepsilon(t,x,q).
\]
\item For any compact subset $E\subset(0,\infty)$, $\widetilde
{w}_\varepsilon$ converges to $\widetilde{w}$ uniformly on
$[0,T]\times
(0,\infty)^d\times E$. Moreover, for any $(t,x,q)\in[0,T]\times
(0,\infty)^d\times(0,\infty)$,
%
\begin{equation}\label{tw=limtwe}
\widetilde{w}(t,x,q) = \lim_{(\varepsilon,t',x',q')\to(0,t,x,q)}
\widetilde{w}_\varepsilon(t',x',q').
\end{equation}
\end{longlist}
\end{lem}
\begin{pf}
(i)
By (\ref{QQe}), we observe that
%
\begin{eqnarray}\label{dominated}
&&\bar{\mathbb{E}}\Bigl[\sup_{\varepsilon\in
(0,1]}Z^{t,x,1}(T)Q^{t,x,q}_\varepsilon(T)\Bigr]\nonumber\\
&&\qquad=\bar{\mathbb
{E}}
\biggl[\sup_{\varepsilon\in(0,1]}q\exp\biggl\{-\frac{1}{2}\varepsilon
^2(T-t)+\varepsilon\bigl(B(T)-B(t)\bigr)\biggr\}\biggr]\nonumber\\
&&\qquad\le
q \bar{\mathbb{E}}\Bigl[\sup_{\varepsilon\in(0,1]}\exp\bigl\{
\varepsilon\bigl(B(T)-B(t)\bigr)\bigr\}\Bigr]\nonumber\\
&&\qquad\le
q \bar{\mathbb{E}}\Bigl[\sup_{\varepsilon\in(0,1]}\exp\bigl\{
\varepsilon\bigl(B(T)-B(t)\bigr)\bigr\}1_{\{B(T)-B(t)\ge0\}}\Bigr]\\
&&\qquad\quad{}+ q \bar
{\mathbb{E}}\Bigl[\sup_{\varepsilon\in(0,1]}\exp\bigl\{
\varepsilon
\bigl(B(T)-B(t)\bigr)\bigr\}1_{\{B(T)-B(t)< 0\}}\Bigr]\nonumber\\
&&\qquad\le q\bar{\mathbb{E}}[\exp\{B(T)-B(t)\}]+q\nonumber\\
&&\qquad= q\biggl(\exp\biggl\{\frac{1}{2}(T-t)\biggr\}+1\biggr) <
\infty.\nonumber
\end{eqnarray}
Then it follows from the dominated convergence theorem that
\begin{eqnarray*}
\lim_{\varepsilon\downarrow0}\widetilde{w}_\varepsilon
(t,x,q)&=&\lim
_{\varepsilon\downarrow0}\bar{\mathbb{E}}\biggl[\biggl(q\exp
\biggl\{-\frac
{1}{2}\varepsilon^2(T-t)+\varepsilon\bigl(B(T)-B(t)\bigr)\biggr\}\\
&&\hspace*{109.5pt}{}-Z^{t,x,1}(T)g(X^{t,x}(T))\biggr)^+\biggr]\nonumber\\
&=& \bar{\mathbb{E}}\bigl[\bigl(q-Z^{t,x,1}(T)g(X^{t,x}(T))\bigr)^+\bigr]\nonumber\\
&=& \mathbb{E}\bigl[\bigl(q-Z^{t,x,1}(T)g(X^{t,x}(T))\bigr)^+\bigr]\\
&=& \widetilde{w}(t,x,q),
\end{eqnarray*}
where the third equality is due to the fact that
$Z^{t,x,1}(T)g(X^{t,x}(T))$ depends only on $w\in\Omega$.

(ii) From (\ref{eqdeftildew}), (\ref{twewithnoZ}) and the
observation that $|(a-b)^+ -(c-b)^+|\le|a-c|$ for any $a,b,c\in
\mathbb{R}$,
%
\begin{eqnarray}\label{UC1}
&&
|\widetilde{w}_\varepsilon(t,x,q)-\widetilde{w}(t,x,q)|\nonumber\\
&&\qquad\le q\bar
{\mathbb{E}}\biggl|\exp\biggl\{-\frac{1}{2}\varepsilon
^2(T-t)+\varepsilon
\bigl(B(T)-B(t)\bigr)\biggr\}-1\biggr|\nonumber\\
&&\qquad\le q\bar{\E}\biggl[\exp\biggl\{\frac{\eps^2}{2}(T-t)+\eps
|B(T)-B(t)|\biggr\}-1\biggr]\\
&&\qquad= q\bigl[\bigl(1+\Phi\bigl(\eps\sqrt{T-t}\bigr)-\Phi\bigl(-\eps\sqrt
{T-t}\bigr)\bigr)
e^{\eps^2 (T-t)}-1\bigr]\nonumber\\
&&\qquad\le q\bigl[\bigl(1+\Phi\bigl(\eps\sqrt{T}\bigr)-\Phi\bigl(-\eps\sqrt
{T}\bigr)\bigr)
e^{\eps^2 T}-1\bigr],\nonumber
\end{eqnarray}
%
where $\Phi(\cdot)$ is the cumulative distribution function of the
standard normal
distribution. Note that the second line of (\ref{UC1}) follows from the
inequality $|e^{v}-1|\le e^{|v|}-1$ for $v\in\R$; this inequality holds
because if $v<0$, $|e^v-1|=1-e^v=(e^{-v}-1)e^v\le e^{-v}-1= e^{|v|}-1$
and if $v\ge0$, $|e^v-1|=e^v-1=e^{|v|}-1$. We can then conclude from
(\ref{UC1}) that $\widetilde{w}_\varepsilon$ converges to
$\widetilde
{w}$ uniformly on $[0,T]\times(0,\infty)^d\times E$, for any compact
subset $E$ of $(0,\infty)$. Now, by
Lemma~\ref{lemsmoothtwe}~$\widetilde{w}_\varepsilon$ is continuous on $(0,T)\times(0,\infty
)^d\times(0,\infty)$. Then as a result of uniform convergence,
$\widetilde{w}$ must be continuous on the same domain. Noting
that
\begin{eqnarray*}
|\widetilde{w}_\varepsilon(t',x',q')-\widetilde{w}(t,x,q)|&\le&
|\widetilde{w}_\varepsilon(t',x',q')-\widetilde
{w}(t',x',q')|\\
&&{}+|\widetilde{w}(t',x',q')-\widetilde{w}(t,x,q)|,
\end{eqnarray*}
we see that (\ref{tw=limtwe}) follows from the continuity of
$\widetilde{w}$ and the uniform convergence of $\widetilde
{w}_\varepsilon$ to $\widetilde{w}$ on $[0,T]\times(0,\infty
)^d\times
E$ for any compact subset $E$ of $(0,\infty)$.
\end{pf}

Thanks to the stability of viscosity solutions, we have the following
result immediately.
%
\begin{prop}\label{propviscositytw}
Under Assumption \ref{aslocLips}, we have that $\widetilde{w}$ is a
continuous viscosity solution to
%
\begin{equation}\label{PDEtw}
\partial_t \widetilde{w} + \tfrac{1}{2}\operatorname{Tr}(\sigma
\sigma'D^2_x \widetilde
{w}) + \tfrac{1}{2}|\theta|^2 q^2 D^2_q \widetilde{w} + q
\operatorname{Tr}(\sigma
\theta D_{xq}\widetilde{w}) = 0
\end{equation}
for $(t,x,q)\in(0,T)\times(0,\infty)^d\times(0,\infty)$, with the
boundary condition
%
\begin{equation}\label{boundarytw}
\widetilde{w}(T,x,q)=\bigl(q-g(x)\bigr)^+.
\end{equation}
\end{prop}
\begin{pf}
By Lemmas \ref{lemsmoothtwe} and \ref{lemtwtwe}(ii), the
viscosity solution property follows as a direct application of
\cite{Touzi-note-Pisa}, Proposition 2.3, and the boundary condition holds
trivially from the definition of $\widetilde{w}$.
\end{pf}

Now we want to relate to $\widetilde{w}$ to $w$. Given $(t,x)\in
[0,T]\times(0,\infty)^d$, recall the notation in Section\vadjust{\goodbreak} \ref{secquantile-hedging}: for any $a
\ge0$, $\bar{A}_a := \{\omega\dvtx Z^{t,x,1}(T)g(X^{t,x}(T)) \le a\}$;
also,~$F(\cdot)$ again denotes the cumulative distribution function
of\break
$Z^{t,x,1}(T)g(X^{t,x}(T))$. We first present another representation
for $\widetilde{w}$ as \mbox{follows}.

\begin{lem}\label{lemmax=tw}
For any $(t,x,q)\in[0,T]\times(0,\infty)^{d}\times(0,\infty)$, we have
\[
\max_{a\ge0}\mathbb{E}\bigl[\bigl(q-Z^{t,x,1}(T)g(X^{t,x}(T))\bigr)1_{\bar
{A}_a}\bigr]=\widetilde{w}(t,x,q).
\]
\end{lem}
\begin{pf}
Let us first take $a<q$. Since $\bar{A}_a\subset\bar{A}_q$ and
$q-Z^{t,x,1}(T)\times g(X^{t,x}(T))\ge0$ on $\bar{A}_q$,
\begin{eqnarray*}
\mathbb{E}\bigl[\bigl(q-Z^{t,x,1}(T)g(X^{t,x}(T))\bigr)1_{\bar{A}_a}\bigr]&\le&\mathbb
{E}\bigl[\bigl(q-Z^{t,x,1}(T)g(X^{t,x}(T))\bigr)1_{\bar{A}_q}\bigr]\\
&=&\widetilde{w}(t,x,q).
\end{eqnarray*}

Now consider $a>q$. Set $F:=\{\omega\dvtx q<Z^{t,x,1}(T)g(X^{t,x}(T)) \le
a\}
$. Observing that $\bar{A}_q$ and $F$ are disjoint, and $\bar
{A}_a=\bar
{A}_q\cup F$, we have
\begin{eqnarray*}
&&\mathbb{E}\bigl[\bigl(q-Z^{t,x,1}(T)g(X^{t,x}(T))\bigr)1_{\bar{A}_a}\bigr]\\
&&\qquad=\mathbb{E}\bigl[\bigl(q-Z^{t,x,1}(T)g(X^{t,x}(T))\bigr)1_{\bar{A}_q}\bigr]+\mathbb
{E}\bigl[\bigl(q-Z^{t,x,1}(T)g(X^{t,x}(T))\bigr)1_{F}\bigr]\\
&&\qquad\le \mathbb{E}\bigl[\bigl(q-Z^{t,x,1}(T)g(X^{t,x}(T))\bigr)1_{\bar{A}_q}\bigr] =
\widetilde{w}(t,x,q),
\end{eqnarray*}
where the inequality is due to the fact that
$q-Z^{t,x,1}(T)g(X^{t,x}(T))<0$ on~$F$.
\end{pf}

Next, we will argue that $w$ and $\widetilde{w}$ are equal.
%
\begin{prop}\label{propw=tw}
$w(t,x,q)=\widetilde{w}(t,x,q)$, for all $(t,x,q)\in[0,T]\times\break
(0,\infty
)^d\times(0,\infty)$.
\end{prop}
\begin{pf}
Given $p\in[0,1]$, there exists $a\ge0$ such that $F(a-)\le p\le
F(a)$. We can take two nonnegative numbers $\lambda_1$ and $\lambda_2$
with $\lambda_1+\lambda_2 = 1$ such that
%
\begin{equation}\label{plambda1,2}
p=\lambda_1F(a)+\lambda_2F(a-).
\end{equation}
Observe that $p-F(a-)=\lambda_1(F(a)-F(a-))$. Plugging this into the
first line of (\ref{eqc-pchar1}), we get
%
\begin{equation}\label{UFaFa-}
U(t,x,p)=U(t,x,F(a-))+\lambda_1 a \bigl(F(a)-F(a-)\bigr).
\end{equation}
Also note from (\ref{eqc-pchar1}) that
\[
a\bigl(F(a)-F(a-)\bigr)=U(t,x,F(a))-U(t,x,F(a-)).
\]
Plugging this back into (\ref{UFaFa-}), we obtain
%
\begin{equation}\label{Ulambda1,2}
U(t,x,p)=\lambda_1U(t,x,F(a))+\lambda_2U(t,x,F(a-)).
\end{equation}
It then follows from (\ref{plambda1,2}) and (\ref{Ulambda1,2}) that
%
\begin{eqnarray}\label{eqw=tw1}\quad
&&
pq- U(t,x,p)\nonumber\\
&&\qquad= \lambda_1[F(a)q-U(t,x,F(a))]+\lambda
_2[F(a-)q-U(t,x,F(a-))]\\
&&\qquad\le \max\{F(a)q-U(t,x,F(a)),F(a-)q-U(t,x,F(a-))\}.
\nonumber
\end{eqnarray}
Choose a sequence $a_n\in[a/2,a)$ such that $a_n\to a$ from the left as
$n\to\infty$. Thanks to Proposition \ref{c-convex}, $p\mapsto U(t,x,p)$
is continuous on $[0,1]$. We can therefore select a subsequence of
$a_n$ (without relabelling) such that, for any $n\in\mathbb{N}$,
\[
F(a-)-F(a_n)<\frac{1}{n} \quad\mbox{and}\quad
U(t,x,F(a_n))-U(t,x,F(a-))<\frac{1}{n}.
\]
It follows that, for any $n\in\mathbb{N}$,
\[
F(a-)q-U(t,x,F(a-)) < F(a_n)q-U(t,x,F(a_n))+\frac{1+q}{n},
\]
which yields
%
\begin{eqnarray}\label{eqw=tw2}
&&F(a-)q-U(t,x,F(a-)) \nonumber\\
&&\qquad\le \limsup_{n\to\infty}\biggl\{
F(a_n)q-U(t,x,F(a_n))+\frac{1+q}{n}\biggr\}\\
&&\qquad\le \sup_{n\in\mathbb{N}}F(a_n)q-U(t,x,F(a_n)).\nonumber
\end{eqnarray}
Combining (\ref{eqw=tw1}) and (\ref{eqw=tw2}), we obtain
\[
pq- U(t,x,p)\le\sup_{\delta\in[a/2,a]}F(\delta)q-U(t,x,F(\delta
))\le
\sup_{\delta\ge0}F(\delta)q-U(t,x,F(\delta)).
\]
This implies
\[
w(t,x,q)=\sup_{p\in[0,1]}\{pq-U(t,x,p)\}\le\sup_{a\ge0}\{
F(a)q-U(t,x,F(a))\}.
\]
Since $F(a)\in[0,1]$ for all $a\ge0$, the opposite inequality is
trivial. We therefore conclude
%
\begin{equation}\label{prop1-1}
w(t,x,q)=\sup_{p\in[0,1]}\{pq-U(t,x,p)\}=\sup_{a\ge0}\{
F(a)q-U(t,x,F(a))\}.\hspace*{-32pt}
\end{equation}
Now, thanks to (\ref{equrep1}), we have
%
\begin{eqnarray}\label{prop1-2}
F(a)q-U(t,x,F(a))&=& F(a)q-\mathbb{E}[Z^{t,x,1}(T)g(X^{t,x}(T))
1_{\bar{A}_a}]\nonumber\\[-8pt]\\[-8pt]
&=&
\mathbb{E}\bigl[\bigl(q-Z^{t,x,1}(T)g(X^{t,x}(T))\bigr)1_{\bar{A}_a}\bigr].\nonumber
\end{eqnarray}
It follows from (\ref{prop1-1}), (\ref{prop1-2}) and Lemma \ref
{lemmax=tw} that
\[
w(t,x,q)=\max_{a\ge0}\mathbb{E}\bigl[\bigl(q-Z^{t,x,1}(T)g(X^{t,x}(T))\bigr)
1_{\bar{A}_a}\bigr]=\widetilde{w}(t,x,q).
\]
\upqed\end{pf}

\begin{rem}
Since $w=\widetilde{w}$, we immediately have the following result from
Proposition \ref{propviscositytw}: $w$ is a continuous viscosity
solution to (\ref{PDEtw}) on $(0,T)\times(0,\infty)^d\times
(0,\infty)$
with the boundary condition (\ref{boundarytw}).
\end{rem}

\subsection{\texorpdfstring{Viscosity supersolution property of $U$.}{Viscosity supersolution property of $U$}} Let us extend the
domain of the map $q\mapsto\widetilde{w}_\varepsilon(t,x,q)$ from
$(0,\infty)$ to the entire real line $\mathbb{R}$ by setting
$\widetilde
{w}_\varepsilon(t,x,0)=0$ and $\widetilde{w}_\varepsilon
(t,x,q)=\infty$
for $q<0$. In this subsection, we consider the Legendre transform of
$\widetilde{w}_\varepsilon$ with respect to the $q$ variable
\[
U_\varepsilon(t,x,p):=\sup_{q\in\mathbb{R}}\{pq-\widetilde
{w}_\varepsilon(t,x,q)\}=\sup_{q\ge0}\{pq-\widetilde{w}_\varepsilon
(t,x,q)\}.
\]
We will first show that $U_\varepsilon$ is a classical solution to a
nonlinear PDE. Then we will relate $U_\varepsilon$ to $U$ and derive
the viscosity supersolution property of $U$.
%
\begin{prop}
Under Assumption \ref{aslocLips}, we have that $U_\varepsilon\in\break
\mathcal{C}^{1,2,2}((0,T)\times(0,\infty)^d\times(0,1))$ and satisfy
the equation
%
\begin{eqnarray}\label{PDEUe}
0&=&\partial_t U_\varepsilon+\frac{1}{2}\operatorname{Tr}[\sigma
\sigma'
D_{xx}U_\varepsilon]\nonumber\\
&&{}+\inf_{a\in\mathbb{R}^d}
\biggl((D_{xp}U_\varepsilon
)'\sigma a
+\frac{1}{2}|a|^2D_{pp}U_\varepsilon-\theta'aD_pU_\varepsilon\biggr)\\
&&{}+\inf_{b\in\mathbb{R}^d}\biggl(\frac{1}{2}|b|^2D_{pp}U_\varepsilon
-\varepsilon D_pU_\varepsilon\mathbf{1}'b\biggr),\nonumber
\end{eqnarray}
where ${\mathbf1}:=(1,\ldots,1)'\in\mathbb{R}^d$, with the boundary condition
%
\begin{equation}\label{boundaryUe}
U_\varepsilon(T,x,p)=pg(x).
\end{equation}
Moreover, $U_\varepsilon(t,x,p)$ is strictly convex in the $p$ variable
for $p\in(0,1)$, with
%
\begin{equation}\label{DpUe}
\lim_{p\downarrow0}D_pU_\varepsilon(t,x,p)=0\quad \mbox{and}\quad \lim
_{p\uparrow1}D_pU_\varepsilon(t,x,p)=\infty.
\end{equation}
\end{prop}
\begin{pf}
Since from Proposition \ref{propstrictconvex} the function $q\mapsto
D_q\widetilde{w}_\varepsilon(t,x,q)$ is strictly increasing on
$(0,\infty)$ with
\[
\lim_{q\downarrow0}D_q\widetilde{w}_\varepsilon(t,x,q)=0
\quad\mbox
{and}\quad
\lim_{q\to\infty}D_q\widetilde{w}_\varepsilon(t,x,q)=1,
\]
its inverse function $p\mapsto H(t,x,p)$ is well defined on $(0,1)$.
Moreover, considering that $\widetilde{w}_\varepsilon(t,x,q)$ is smooth
on $(0,T)\times(0,\infty)^d\times(0,\infty)$, $U_\varepsilon
(t,x,p)$ is
smooth on $(0,T)\times(0,\infty)^d\times(0,1)$ and can be expressed as
%
\begin{eqnarray}\label{eqUepsilon=sup}
U_\varepsilon(t,x,p)&=&\sup_{q\ge0}\{pq-\widetilde{w}_\varepsilon
(t,x,q)\}\nonumber\\[-8pt]\\[-8pt]
&=&pH(t,x,p)-\widetilde{w}_\varepsilon(t,x,H(t,x,p));\nonumber
\end{eqnarray}
see, for example,~\cite{Roc97}.
By direct calculations, we have
%
\begin{eqnarray}\label{derivatives}
D_p U_\varepsilon(t,x,p) &=& H(t,x,p),\nonumber\\[-1pt]
D_{pp} U_\varepsilon(t,x,p) &=& D_p H(t,x,p) = \frac
{1}{D_{qq}\widetilde
{w}_\varepsilon(t,x,H(t,x,p))},\nonumber\\[-1pt]
D_{x} U_\varepsilon(t,x,p) &=& -D_x \widetilde{w}_\varepsilon
(t,x,H(t,x,p)),\nonumber\\[-1pt]
D_{xx} U_\varepsilon(t,x,p) &=& -D_{xx} \widetilde{w}_\varepsilon
(t,x,H(t,x,p))\\[-1pt]
&&{}+\frac{1}{D_{pp} U_\varepsilon
(t,x,p)}(D_{px}U_\varepsilon
)(D_{px}U_\varepsilon)',\nonumber\\[-1pt]
D_{px} U_\varepsilon(t,x,p) &=& -D_{qx} \widetilde{w}_\varepsilon
(t,x,H(t,x,p)) D_{pp} U_\varepsilon(t,x,p),\nonumber\\[-1pt]
\partial_{t} U_\varepsilon(t,x,p) &=& -\partial_{t} \widetilde
{w}_\varepsilon(t,x,H(t,x,p)).\nonumber
\end{eqnarray}
In particular, we see that $U_\varepsilon(t,x,p)$ is strictly convex in
$p$ for $p\in(0,1)$ and satisfies (\ref{DpUe}). Now by setting
$q:=H(t,x,p)$, we deduce from (\ref{PDEtwe}) that
%
\begin{eqnarray}\label{twetoUe}\quad
0 &=& -\partial_t
\widetilde{w}_\varepsilon-\frac{1}{2}\operatorname{Tr}[\sigma
\sigma'
D_{xx}\widetilde{w}_\varepsilon]\nonumber\\[-1pt]
&&{}-\frac{1}{2}(|\theta|^2+\varepsilon
^2)q^2D_{qq}\widetilde{w}_\varepsilon-q\operatorname{Tr}[\sigma\theta
D_{xq}\widetilde
{w}_\varepsilon]\nonumber\\[-1pt]
&=&
\partial_t U_\varepsilon+\frac{1}{2}\operatorname{Tr}[\sigma\sigma'
D_{xx}U_\varepsilon]-\frac{1}{2D_{pp}U_\varepsilon}\operatorname
{Tr}[\sigma\sigma'
(D_{px}U_\varepsilon)(D_{px}U_\varepsilon)']\nonumber\\[-1pt]
&&{}-\frac{1}{2}(|\theta
|^2+\varepsilon^2)\frac{(D_p U_\varepsilon)^2}{D_{pp}U_\varepsilon
}
+\frac{D_p U_\varepsilon}{D_{pp}U_\varepsilon
}\operatorname{Tr}[\sigma
\theta D_{px}U_\varepsilon]\nonumber\\[-1pt]
&=&
\partial_t U_\varepsilon+\frac{1}{2}\operatorname{Tr}[\sigma\sigma'
D_{xx}U_\varepsilon]\nonumber\\[-9pt]\\[-9pt]
&&{}+\biggl((D_{xp}U_\varepsilon)'\sigma a^*+\frac
{1}{2}|a^*|^2D_{pp}U_\varepsilon-\theta'a^*D_p U_\varepsilon
\biggr)\nonumber\\[-1pt]
&&{}+\biggl(\frac{1}{2}|b^*|^2D_{pp}U_\varepsilon-\varepsilon
D_pU_\varepsilon{\mathbf1}'b^*\biggr)\nonumber\\[-1pt]
&=&
\partial_t U_\varepsilon+\frac{1}{2}\operatorname{Tr}[\sigma\sigma'
D_{xx}U_\varepsilon]\nonumber\\[-1pt]
&&{}+\inf_{a\in\mathbb{R}^d}
\biggl((D_{xp}U_\varepsilon
)'\sigma a+\frac{1}{2}|a|^2D_{pp}U_\varepsilon-\theta'a D_p
U_\varepsilon\biggr)\nonumber\\[-1pt]
&&{}+\inf_{b\in\mathbb{R}^d}\biggl(\frac{1}{2}|b|^2D_{pp}U_\varepsilon
-\varepsilon D_pU_\varepsilon{\mathbf1}'b\biggr),\nonumber
\end{eqnarray}
where the minimizers $a^*$ and $b^*$ are defined by
\begin{eqnarray*}
a^*(t,x,p)&:=& \frac{D_p U_\varepsilon(t,x,p)}{D_{pp} U_\varepsilon
(t,x,p)}\theta(x)-\frac{1}{D_{pp} U(t,x,p)}\sigma'(x)D_{px}
U_\varepsilon(t,x,p),\\[-1pt]
b^*(t,x,p)&:=& \varepsilon\frac{D_pU_\varepsilon
(t,x,p)}{D_{pp}U_\varepsilon(t,x,p)}{\mathbf1}.\vadjust{\goodbreak}
\end{eqnarray*}
Finally, observe that for any $p\in(0,1)$, the maximum of
$pq-(q-g(x))^+$ is attained at $q=g(x)$. Therefore, by (\ref{boundarytwe})
\begin{eqnarray*}
U_\varepsilon(T,x,p)&=&\sup_{q\ge0}\{pq-\widetilde{w}_\varepsilon
(T,x,p)\}= \sup_{q\ge0}\bigl\{pq-\bigl(q-g(x)\bigr)^+\bigr\}\\
&=&pg(x).
\end{eqnarray*}
\upqed\end{pf}

Now we intend to use the stability of viscosity solutions to derive the
supersolution property of $U$. We first have the following observation.
%
\begin{lem}\label{lemU=liminfUe}
For any $(t,x,p)\in[0,T]\times(0,\infty)^d\times\mathbb{R}$, we have
\[
\liminf_{(\varepsilon,\tilde{t},\tilde{x},\tilde{p})\to
(0,t,x,p)}U_\varepsilon(\tilde{t},\tilde{x},\tilde{p})=U(t,x,p).
\]
\end{lem}
\begin{pf}
As a consequence of Lemma \ref{lemtwtwe}(ii), $\widetilde
{w}_\varepsilon(t,x,q)$ is continuous at $(\varepsilon,t,x,q)\in
[0,\infty)\times[0,T]\times(0,\infty)^d\times(0,\infty)$. This implies
that $U_\varepsilon(t,x,p)=\sup_{q\ge0}\{pq-\widetilde
{w}_\varepsilon
(t,x,q)\}$ is lower semicontinuous at $(\varepsilon,t,x,p)\in
[0,\infty
)\times[0,T]\times(0,\infty)^d\times\mathbb{R}$. It follows that
\begin{eqnarray*}
\liminf_{(\varepsilon,\tilde{t},\tilde{x},\tilde{p})\to
(0,t,x,p)}U_\varepsilon(\tilde{t},\tilde{x},\tilde{p})&=&\sup_{q\ge
0}\{
pq-\widetilde{w}(t,x,q)\}=\sup_{q\ge0}\{pq-{w}(t,x,q)\}\\
&=&U(t,x,p),
\end{eqnarray*}
where the second equality follows from Proposition \ref{propw=tw}.
\end{pf}

Before we state the supersolution property for $U$, let us first
introduce some notation. For any $(x,\beta,\gamma,\lambda)\in
(0,\infty
)^d\times\mathbb{R}\times\mathbb{R}\times\mathbb{R}^d$, define
\[
G(x,\beta,\gamma,\lambda):=\inf_{a\in\mathbb{R}^d}\biggl(\lambda
'\sigma
(x)a+\frac{1}{2}|a|^2\gamma-\beta\theta(x)'a\biggr).
\]
We also consider the lower semicontinuous envelope of $G$
\[
G_*(x,\beta,\gamma,\lambda):=\liminf_{(\tilde{x},\tilde{\beta
},\tilde
{\gamma},\tilde{\lambda})\to(x,\beta,\gamma,\lambda)} G(\tilde
{x},\tilde
{\beta},\tilde{\gamma},\tilde{\lambda}).
\]
Observe that, by definition,
%
\begin{equation}\label{Hlower*}
G_*(x,\beta,\gamma,\lambda)=\cases{
G(x,\beta,\gamma,\lambda), &\quad if $\gamma
>0$;\cr
-\infty, &\quad if $\gamma\le0$.}
\end{equation}

\begin{prop}\label{propviscosityU}
Under Assumption \ref{aslocLips}, $U$ is a lower semicontinuous
viscosity supersolution to the equation
%
\begin{equation}\label{PDEU}
0 \ge\partial_t U+\tfrac{1}{2}\operatorname{Tr}[\sigma\sigma'
D_{xx}U]+G_*(x,D_{p}U,D_{pp}U,D_{xp}U)
\end{equation}
for $(t,x,p)\in(0,T)\times(0,\infty)^d\times(0,1)$, with the
boundary condition
%
\begin{equation}\label{boundaryU}
U(T,x,p) = pg(x).
\end{equation}
\end{prop}
\begin{pf}
Note that the lower semicontinuity of $U$ is a consequence of
Lem\-ma~\ref
{lemU=liminfUe}, and the boundary condition (\ref{boundaryU}) comes
from the fact that $w=\widetilde{w}$ and the definition of $\widetilde
{w}$ as the following calculation demonstrates:
\begin{eqnarray*}
U(T,x,p)&=&\sup_{q\ge0}\{pq-w(T,x,p)\}=\sup_{q\ge0}\{pq-\widetilde
{w}(T,x,p)\}\\
&=&\sup_{q\ge0}\bigl\{pq-\bigl(q-g(x)\bigr)^+\bigr\}=pg(x).
\end{eqnarray*}

Let us now turn to the PDE characterization inside the domain of $U$.
Set $\bar{x}:=(t,x,p)$. Let $\varphi$ be a smooth function such that
$U-\varphi$ attains a local minimum at $\bar{x}_0=(t_0,x_0,p_0)\in
(0,T)\times(0,\infty)^d\times(0,1)$ and $U(\bar{x}_0)=\varphi(\bar
{x}_0)$. Note from (\ref{Hlower*}) that as $D_{pp}\varphi(\bar
{x}_0)\le0$, we must have $G_*(x_0,D_{p}\varphi,D_{pp}\varphi
,D_{xp}\varphi)=-\infty$. Thus, the viscosity supersolution property
(\ref{PDEU}) is trivially satisfied. We therefore assume in the
following that $D_{pp}\varphi(\bar{x}_0)> 0$.

Let $F_\varepsilon(\bar{x},\partial_tU_\varepsilon(\bar
{x}),D_pU_\varepsilon(\bar{x}),D_{pp}U_\varepsilon(\bar
{x}),D_{xp}U_\varepsilon(\bar{x}),D_{xx}U_\varepsilon(\bar
{x}))$
denote the\break right-hand side of (\ref{PDEUe}). Observe from the
calculation in (\ref{twetoUe}) that as $\gamma>0$,
\begin{eqnarray*}
F_\varepsilon(\bar{x},\alpha,\beta,\gamma,\lambda,A)&=& \alpha
+\frac
{1}{2}\operatorname{Tr}[\sigma(x)\sigma(x)'A]-\frac{1}{2\gamma
}\operatorname{Tr}[\sigma(x)\sigma
(x)'\lambda\lambda']\\
&&{}-\frac{\beta^2}{2\gamma}\bigl(|\theta
(x)|^2+\varepsilon
^2\bigr)+\frac{\beta}{\gamma}\operatorname{Tr}[\sigma(x)\theta
(x)\lambda].
\end{eqnarray*}
This shows that $F_\varepsilon$ is continuous at every $(\varepsilon
,\bar{x},\alpha,\beta,\gamma,\lambda,A)$ as long as $\gamma>0$. It
follows that for any $z=(\bar{x},\alpha,\beta,\gamma,\lambda,A)$ with
$\gamma>0$, we have
%
\begin{eqnarray}\label{F0}
F_*(z):\!&=&\liminf_{(\varepsilon,z')\to(0,z)}F_\varepsilon(z')=F_0(z)\nonumber\\
&=&
\alpha+\frac{1}{2}\operatorname{Tr}[\sigma(x)\sigma(x)'A]\\
&&{}+\inf
_{a\in\mathbb{R}^d}
\biggl(\lambda'\sigma(x)a+\frac{1}{2}|a|^2\gamma-\theta(x)'a\beta\biggr).\nonumber
\end{eqnarray}
Since we have $U(\bar{x})=\liminf_{(\varepsilon,\bar{x}')\to
(0,\bar
{x})} U_\varepsilon(\bar{x}')$ from Lemma \ref{lemU=liminfUe}, we may
use the same argument in~\cite{Touzi-note-Pisa}, Proposition 2.3, and
obtain that
\[
F_*(\bar{x}_0,\partial_t\varphi(\bar{x}_0),D_p\varphi(\bar
{x}_0),D_{pp}\varphi(\bar{x}_0),D_{xp}\varphi(\bar
{x}_0),D_{xx}\varphi
(\bar{x}_0)) \le0.
\]
Considering that $D_{pp}\varphi(\bar{x}_0)>0$, we see from (\ref{F0})
and (\ref{Hlower*}) that this is the desired supersolution property.
\end{pf}

A few remarks are in order:
%
\begin{rem}
Results similar to Proposition \ref{propviscosityU} were proved by
\cite{BET08}, with stronger assumptions [such as the existence of an
equivalent martingale measure and the existence of a unique strong
solution to (\ref{eqstk})], using the stochastic target formulation.
Here, we first observe that the Legendre transform of $U$ is equal to
$\widetilde{w}$ and that $\widetilde{w}$ can be approximated by
$\widetilde{w}_\varepsilon$, which is a classical solution to a linear
PDE and is strictly convex in $q$; then, we apply the Legendre duality
argument, as carried out in~\cite{KLS}, to show that $U_\varepsilon$,
the Legendre transform of $\widetilde{w}_\varepsilon$, is a classical
solution to a nonlinear PDE. Finally, the stability of viscosity
solutions leads to the viscosity supersolution property of $U$.
\end{rem}
%
\begin{rem}
Instead of relying on the Legendre duality we could directly apply the
dynamic programming principle of~\cite{Haussmann-Lepeltier} for weak
solutions to the formulation in Section \ref{secstoc-cont}. The
problem with this approach is that it requires some growth conditions
on the coefficients of (\ref{eqstk}), which would rule out the
possibility of arbitrage, the thing we are interested in and want to
keep in the scope of our discussion.
\end{rem}
%
\begin{rem}
Under our assumptions, the solution of (\ref{PDEU}) may not be unique
as pointed out below:
\begin{longlist}
\item Let us consider the PDE satisfied by the superhedging price
$U(t,x,1)$,
%
\begin{eqnarray}
\label{eqpde1}
&0 = v_t + \tfrac1 2 \operatorname{Tr}(\sigma\sigma' D_x^2 v)\qquad
\mbox{on }
(0,T)\times(0,\infty)^d,&
\\
%
\label{eqtc1}
&v(T-,x) = g(x)\qquad \mbox{on } (0,\infty)^d.&
\end{eqnarray}
Unless additional boundary conditions are specified, this PDE may have
multiple solutions.
The role of additional boundary conditions in identifying $(t,x) \to
U(t,x,1)$ as the unique solution of the above Cauchy problem is
discussed in Section 4 of~\cite{bkx10}. Also see~\cite{MR2454711} for a
similar discussion on boundary conditions for degenerate parabolic
problems on bounded domains.\looseness=1

Even when additional boundary conditions are specified, the growth
of~$\sigma$ might lead to the loss of uniqueness; see, for example,
\cite{BX09} and Theorem~4.8 of~\cite{bkx10} which give necessary and
sufficient conditions on the uniqueness of Cauchy problems in one and
two-dimensional settings in terms of the growth rate of its
coefficients.
We also note that~\cite{FK10OA} develops necessary and sufficient
conditions for uniqueness, in terms of the attainability of the
boundary of the positive orthant by an auxiliary diffusion (or, more
generally, an auxiliary It\^{o}) process.\looseness=0

\item Let $\Delta U(t,x,1)$ be the difference of two solutions of
(\ref{eqpde1})--(\ref{eqtc1}). Then both $U(t,x,p)$ and
$U(t,x,p)+\Delta U(t,x,1)$ are solutions of (\ref{PDEU}) (along with
its boundary conditions). As a result, whenever (\ref{eqpde1}) and
(\ref{eqtc1}) have multiple solutions, so does the PDE (\ref{PDEU})
for the value function~$U$.
\end{longlist}
\end{rem}

\subsection{\texorpdfstring{Characterizing the value function $U$.}{Characterizing the value function $U$}} We intend to
characterize $U_\varepsilon$ as the smallest solution among a
particular class of functions, as specified below in Proposition \ref
{propcharacUe}. Then, considering that $\liminf_{(\varepsilon,\tilde
{t},\tilde{x},\tilde{p})\to(0,t,x,p)}U_\varepsilon(\tilde
{t},\allowbreak \tilde{x},\tilde{p})=U(t,x,p)$ from Lemma \ref{lemU=liminfUe}, this gives a
characterization for $U$. In determining $U$ numerically, one could use
$U_{\eps}$ as a proxy for $U$ for small enough $\eps$. Additionally, we
will characterize $U$ as the smallest nonnegative supersolution of
(\ref{PDEU}) in Proposition \ref{propcharacU}.
%
\begin{prop}\label{propcharacUe}
Suppose that Assumption \ref{aslocLips} holds. Let $u\dvtx[0,T]\times
(0,\infty)^d\times[0,1]\mapsto[0,\infty)$ be of class $\mathcal
{C}^{1,2,2}((0,T)\times(0,\infty)^d\times(0,1))$ such that
$u(t,x,0)=0$, and $u(t,x,p)$ is strictly convex in $p$ for $p\in(0,1)$ with
%
\begin{equation}\label{Dpu}
\lim_{p\downarrow0}D_pu(t,x,p)=0 \quad\mbox{and} \quad\lim_{p\uparrow
1}D_pu(t,x,p)=\infty.
\end{equation}
If $u$ satisfies the partial differential inequality
%
\begin{eqnarray}\label{PDEu}
0 &\ge&\partial_t u+\frac{1}{2}\operatorname{Tr}[\sigma\sigma'
D_{xx}u]+\inf_{a\in
\mathbb{R}^d}\biggl((D_{xp}u)'\sigma a
+\frac{1}{2}|a|^2D_{pp}u-\theta'aD_p u\biggr)\nonumber\hspace*{-35pt}\\[-8pt]\\[-8pt]
&&{}+\inf_{b\in\mathbb
{R}^d}\biggl(\frac{1}{2}|b|^2D_{pp}u-\varepsilon D_p u{\mathbf1}'b\biggr),\nonumber\hspace*{-35pt}
\end{eqnarray}
where ${\mathbf1}:=(1,\ldots,1)'\in\mathbb{R}^d$, with the boundary condition
%
\begin{equation}\label{boundaryu}
u(T,x,p)=pg(x),
\end{equation}
then $u\ge U_\varepsilon$.
\end{prop}
\begin{pf}
Let us extend the domain of the map $p\mapsto u(t,x,p)$ from $[0,1]$ to
the entire real line $\mathbb{R}$ by setting $u(t,x,p)=0$ for $p<0$ and
$u(t,x,p)=\infty$ for $p>1$. Then we can define the Legendre transform
of $u$ with respect to the $p$ variable
%
\begin{eqnarray}\label{wu>0}
w^u(t,x,q):\!&=&\sup_{p\in\mathbb{R}}\{pq-u(t,x,p)\}\nonumber\\[-8pt]\\[-8pt]
&=&\sup_{p\in[0,1]}\{pq-u(t,x,p)\}\ge0\qquad \mbox{for } q\ge0,\nonumber
\end{eqnarray}
where the positivity comes from the condition $u(t,x,0)=0$. First,
observe that since $u$ is nonnegative, we must have
%
\begin{equation}\label{wu<q}
w^u(t,x,q)\le\sup_{p\in[0,1]}pq = q\qquad \mbox{for any } q\ge0.
\end{equation}
Next, we derive the boundary condition of $w^u$ from (\ref{boundaryu}) as
%
\begin{eqnarray}\label{boundarywu}
w^u(T,x,q)&=&\sup_{p\in[0,1]}\{pq-u(T,x,p)\}=\sup_{p\in[0,1]}\{
pq-pg(x)\}\nonumber\\[-8pt]\\[-8pt]
&=&\bigl(q-g(x)\bigr)^+.\nonumber
\end{eqnarray}
Now, since $u(t,x,p)$ is strictly convex in $p$ for $p\in(0,1)$ and
satisfies (\ref{Dpu}), we can express $w^u$ as
\[
w^u(t,x,q)=J(t,x,q)q-u(t,x,J(t,x,q))\qquad \mbox{for } q\in(0,\infty),
\]
where $q\mapsto J(\cdot,q)$ is the inverse function of $p\mapsto D_p
u(\cdot,p)$. We can therefore compute the derivatives of $w^u(t,x,q)$
in terms of those of $u(t,x,J(t,x,q))$, as carried out in (\ref
{derivatives}). We can then perform the same calculation in (\ref
{twetoUe}) (but going backward), and deduce from (\ref{PDEu}) that for any
$(t,x,q)\in(0,T)\times(0,\infty)^d\times(0,\infty)$,
%
\begin{eqnarray}\label{PDEwu}
0 &\le&\partial_t w^u+\tfrac{1}{2}\operatorname{Tr}[\sigma\sigma'
D_{xx}w^u]+\tfrac
{1}{2}(|\theta|^2+\varepsilon^2)q^2D_{qq}w^u\nonumber\\[-8pt]\\[-8pt]
&&{}+q\operatorname{Tr}[\sigma\theta D_{xq}w^u].\nonumber
\end{eqnarray}
Define the process $Y(s):=Z^{t,x,1}(s)Q^{t,x,q}_\varepsilon(s)$ for
$s\in[t,T]$. Observing that
\[
Y(s)=
q\exp\bigl\{-\tfrac{1}{2}\varepsilon^2(s-t)+\varepsilon\bigl(B(s)-B(t)\bigr)\bigr\},
\]
we conclude that $Y(\cdot)$ is a martingale with $\bar{\mathbb
{E}}[Y(s)]=q$ and $\operatorname{Var}(Y(s))=q^2(e^{\varepsilon
^2(s-t)}-1)$ for
all $s\in[t,T]$, and satisfies the following SDE:
\[
dY(s) = \varepsilon Y(s) \,dB(s) \qquad\mbox{for } s\in[t,T]\quad
\mbox{and}\quad Y(t)=q.
\]
Thanks to the Burkholder--Davis--Gundy inequality, there exists a
constant $C>0$ such that
%
\begin{eqnarray}\label{Ybdd}
\bar{\mathbb{E}}\Bigl[{\max_{t\le s\le T}}|Y(s)|^2\Bigr]&\le& C\bar
{\mathbb
{E}}\biggl[\int_{t}^T\varepsilon^2Y^2(s)\,ds\biggr]
\nonumber\\[-8pt]\\[-8pt]
&=&C\varepsilon^2\int_t^T q^2\bigl(e^{\varepsilon^2(s-t)}-1\bigr)+q^2 \,ds
<\infty.\nonumber
\end{eqnarray}
For each $n\in\mathbb{N}$, define the stopping time
\[
\tau_n:=\inf\{
s\ge
t\dvtx|X^{t,x}(s)|>n \mbox{ or } |Q^{t,x,q}_\varepsilon(s)|>n\}.
\]
By applying the product rule to the process $Z^{t,x,1}(\cdot)w^u(\cdot
,X^{t,x}(\cdot),Q^{t,x,q}_\varepsilon(\cdot))$ and using (\ref
{PDEwu}), we get
%
\begin{eqnarray}\label{wu<E}
&&w^u(t,x,q)\nonumber\\
&&\qquad\le\bar{\mathbb{E}}\bigl[Z^{t,x,1}(T\wedge\tau_n)w^u\bigl(T\wedge
\tau
_n,X^{t,x}(T\wedge\tau_n),Q^{t,x,q}_\varepsilon(T\wedge\tau_n)\bigr)\bigr]\\
&&\eqntext{\mbox
{for } n\in\mathbb{N}.}
\end{eqnarray}
Now, observe from (\ref{wu<q}) that
$Z^{t,x,1}(s)w^u(s,X^{t,x}(s),Q^{t,x,q}_\varepsilon(s))\le Y(s)$ for
any $s\in[t,T]$. Then from (\ref{Ybdd}), we may apply the dominated
convergence theorem to (\ref{wu<E}) and obtain
\begin{eqnarray*}
w^u(t,x,q)&\le&\bar{\mathbb
{E}}[Z^{t,x,1}(T)w^u(T,X^{t,x}(T),Q^{t,x,q}_\varepsilon(T))]\\
&=& \bar{\mathbb{E}}\bigl[Z^{t,x,1}(T)\bigl(Q^{t,x,q}_\varepsilon
(T)-g(X^{t,x}(T))\bigr)^+\bigr] = \widetilde{w}_\varepsilon(t,x,q),
\end{eqnarray*}
where the first equality is due to (\ref{boundarywu}). It follows that
\begin{eqnarray*}
u(t,x,p)&=&\sup_{q\ge0}\{pq-w^u(t,x,q)\}\ge\sup_{q\ge0}\{
pq-\widetilde
{w}_\varepsilon(t,x,q)\}\\
&=&U_\varepsilon(t,x,p).
\end{eqnarray*}
\upqed\end{pf}
%
\begin{prop}\label{propcharacU}
Suppose Assumption \ref{aslocLips} holds. Let $u\dvtx[0,T]\times\break(0,\infty
)^d\times[0,1]\mapsto[0,\infty)$ be such that $u(t,x,0)=0$, $u(t,x,p)$
is convex in $p$, and the Legendre transform of $u$ with respect to the
$p$ variable, as defined in the proof of Proposition \ref
{propcharacUe}, is continuous on $[0,T]\times(0,\infty)^d\times
(0,\infty)$. If $u$
is a lower semicontinuous viscosity supersolution to (\ref{PDEU}) on
$(0,T)\times(0,\infty)^d\times(0,1)$ with the boundary condition
$(\ref{boundaryU})$, then $u\ge U$.
\end{prop}
\begin{pf}
Let us denote by $w^u$ the Legendre transform of $u$ with respect to
$p$. By the same argument in the proof of Proposition \ref
{propcharacUe}, we can show that (\ref{wu>0}), (\ref{wu<q}) and
(\ref{boundarywu}) are true. Moreover, as demonstrated in
\cite{BET08}, Section 4, by
using the supersolution property of $u$ we may show that $w^u$ is an
upper semicontinuous viscosity subsolution on $(0,T)\times(0,\infty
)^d\times(0,\infty)$ to the equation
%
\begin{equation}\label{PDEwu2}\quad
\partial_t w^u + \tfrac{1}{2}\operatorname{Tr}(\sigma\sigma'D^2_x
w^u) + \tfrac
{1}{2}|\theta|^2 q^2 D^2_q w^u + q \operatorname{Tr}(\sigma\theta
D_{xq}w^u) = 0.
\end{equation}

Let $\rho(t,x,q)$ be a nonnegative $\mathcal{C}^\infty$ function
supported in $\{(t,x,q)\dvtx t\in[0,1],|(x,q)|\le1\}$ with unit mass.
Without loss of generality, set $w^u(t,x,q)=0$ for $(t,x,q)\in\mathbb
{R}^{d+2}\cap([0,T]\times(0,\infty)^d\times(0,\infty))^c$.
Then for any $(t,x,q)\in\mathbb{R}^{d+2}$, define
\[
w^u_\delta(t,x,q):=\rho^\delta\mathop{*}w^u \qquad\mbox{where } \rho
^\delta
(t,x,q):=\frac{1}{\delta^{d+2}}\rho\biggl(\frac{t}{\delta^2},\frac
{x}{\delta},\frac{q}{\delta}\biggr).
\]
By definition, $w^u_\delta$ is $\mathcal{C}^\infty$. Moreover, it can
be shown that $w^u_\delta$ is a subsolution to (\ref{PDEwu2}) on
$(0,T)\times(0,\infty)^d\times(0,\infty)$; see, for example,
(3.23) and (3.24) in~\cite{FTW10}, Section 3.3.2, and
\cite{BJ02}, Lemma 2.7.
Set $\bar{x}=(t,x,q)$. By (\ref{wu<q}), we see from the definition of
$w^u_\delta$ that
%
\begin{equation}\label{wud<q+d}
w^u_\delta(\bar{x})= \int_{\mathbb{R}^{d+2}}\rho^\delta
(y)w^u(\bar
{x}-y)\,dy \le(q+\delta)\int_{\mathbb{R}^{d+2}}\rho^\delta(y)\,dy =
q+\delta.\hspace*{-32pt}
\end{equation}
Also, the continuity of $w^u$ implies that $w^u_\delta\to w^u$ for
every $(t,x,q)\in[0,T]\times(0,\infty)^d\times(0,\infty)$. Considering
that $w^u_\delta$ is a classical subsolution to (\ref{PDEwu2}), we~have
%
\begin{eqnarray}\label{wud<E}
&&w^u_\delta(t,x,q)\nonumber\\
&&\qquad\le\mathbb{E}\bigl[Z^{t,x,1}(T\wedge\tau_n)w^u_\delta
\bigl(T\wedge\tau_n,X^{t,x}(T\wedge\tau_n),Q^{t,x,q}(T\wedge\tau
_n)\bigr)\bigr]\nonumber\\[-8pt]\\[-8pt]
&&\eqntext{\mbox
{for } n\in\mathbb{N},}
\end{eqnarray}
where $\tau_n:=\inf\{s\ge t\dvtx |X^{t,x}(s)|>n \mbox{ or }
|Q^{t,x,q}(s)|>n\}$. For each fixed $n\in\mathbb{N}$, thanks to
(\ref{wud<q+d}), we may\vadjust{\goodbreak} apply the dominated convergence theorem as we take
the limit $\delta\to0$ in (\ref{wud<E}). We thus get
%
\begin{eqnarray}\label{wu<En}\quad
&&w^u(t,x,q)\nonumber\\[-8pt]\\[-8pt]
&&\qquad\le\mathbb{E}\bigl[Z^{t,x,1}(T\wedge\tau_n)w^u\bigl(T\wedge\tau
_n,X^{t,x}(T\wedge\tau_n),Q^{t,x,q}(T\wedge\tau_n)\bigr)\bigr].\nonumber
\end{eqnarray}
Now by applying the reverse of Fatou's lemma (see, e.g.,
\cite{prob-with-mart}, page 53) to~(\ref{wu<En}), we have
\begin{eqnarray*}
w^u(t,x,q)&\le&\mathbb{E}\Bigl[Z^{t,x,1}(T)\limsup_{n\to\infty
}w^u\bigl(T\wedge
\tau_n,X^{t,x}(T\wedge\tau_n),Q^{t,x,q}(T\wedge\tau_n)\bigr)\Bigr]\\
&\le&\mathbb{E}[Z^{t,x,1}(T)w^u(T,X^{t,x}(T),Q^{t,x,q}(T))]\\
&\le&\mathbb{E}\bigl[Z^{t,x,1}(T)\bigl(Q^{t,x,q}(T)-g(X^{t,x}(T))\bigr)^+\bigr]\\
&=&w(t,x,q),
\end{eqnarray*}
where the second inequality follows from the upper semicontinuity of
$w^u$, and the third inequality is due to (\ref{boundarywu}). Finally,
we conclude that
\[
u(t,x,p)=\sup_{q\ge0}\{pq-w^u(t,x,q)\}\ge\sup_{q\ge0}\{
pq-w(t,x,q)\}=U(t,x,p),
\]
where the first equality is guaranteed by the convexity and the lower
semicontinuity of $u$.
\end{pf}

One should note that $U_{\eps}$ and $U$ satisfy the assumptions stated
in Propositions \ref{propcharacUe} and \ref{propcharacU},
respectively. Therefore, one can indeed see these results as PDE
characterizations of the functions $U_{\eps}$ and $U$.

In this paper, under the context where equivalent martingale measures
need not exist, we discuss the quantile hedging problem and focus on
the PDE characterization for the minimum amount of initial capital
required for quantile hedging. An interesting problem following this is
the construction of the corresponding quantile hedging portfolio. We
leave this problem open for future research.

\section*{\texorpdfstring{Acknowledgments.}{Acknowledgments}}

We would like to thank Johannes Ruf for his feedback. We also would
like to express our gratitude to the anonymous Associate Editor and
referees whose comments helped us improve our paper significantly.


%

\printaddresses

\end{document}